\documentclass[final,1p,times,nonatbib]{elsarticle}
\usepackage{graphicx}
\usepackage{amsmath,amssymb}
\usepackage{amsthm}
\usepackage{mathtools} 
\usepackage{enumitem}
\usepackage{bm}
\usepackage{bbm}
\usepackage[colorlinks=true, urlcolor=blue, linkcolor=blue, citecolor=blue]{hyperref}
\usepackage{tikz} 
\usetikzlibrary{matrix, decorations.pathreplacing, positioning}

\usepackage[%
style=phys,%
articletitle=true,biblabel=brackets,backend=biber,
chaptertitle=false,pageranges=false%
]
{biblatex}

\newcommand{\eeqref}{Eq.~\eqref}
\newcommand{\ccite}{Ref.~\cite}

\newcommand{\CC}{\mathbbm{C}}

\newcommand{\MM}{\mathbbm{M}}

\DeclareMathOperator{\tr}{Tr}

\DeclareMathOperator{\real}{Re}
\DeclareMathOperator{\imag}{Im}

\newcommand{\id}{\mathbbm{1}}
\newcommand{\idm}{\mathbbm{I}}

\newcommand{\M}{\mathcal{M}}

\newcommand{\bra}[1]{\langle #1 |}
\newcommand{\ket}[1]{| #1 \rangle}
\newcommand{\braket}[2]{\langle #1 | #2\rangle} 
\newcommand{\ketbra}[2]{| #1 \rangle\langle #2|}
\newcommand{\norm}[1]{\left\lVert #1 \right\rVert}

\newcommand{\Y}{\Upsilon}

\newtheorem{defn}{Definition}[section]

\newtheorem{obsr}[defn]{Observation}
\newtheorem{note}[defn]{Note}

\newtheorem{thm}[defn]{Theorem}
\newtheorem{lem}[defn]{Lemma}
\newtheorem{prop}[defn]{Proposition}
\newtheorem{cor}[defn]{Corollary}

\theoremstyle{remark}

\newtheorem*{pf}{Proof}
\newtheorem*{rmk*}{Remark}

\journal{Linear Algebra and its Applications}

\begin{document}
	
	\begin{frontmatter}
		
		
		
		\title{CP-preserving channels }
		\author{Indu Bala}
		\ead{indu.qic@gmail.com}
		\author{Sourav Das}
		\ead{souravmath.qic@gmail.com}
		\author{Swapan Rana}
		\ead{swapanqic@gmail.com}
		\address{Physics and Applied Mathematics Unit, Indian Statistical Institute, 203 B. T. Road, Kolkata 700108, India}

		\begin{abstract}
			Completely positive (CP) matrices are ubiquitous in modern science and technology with application in optimization, graph theory, and quantum entanglement. Recently, Johnston \emph{et al.} [Linear Algebra and its Applications, 2022] have cast CP matrices into the framework of quantum resource theories, where CP states serve as free states and CP-preserving channels act as free operations. This work addresses several questions raised in their work. Specifically, we provide the necessary and sufficient conditions of CP-preserving channels in small dimensions, which are necessary in higher dimensions, and discuss the resource quantification via the trace distance of non-negativity. By constructing an explicit counterexample, we demonstrate that the trace-distance measure of non-negativity violates strong monotonicity. We also provide an alternative proof that  every  CPDNN channel  $\Phi:\MM_n\to \MM_2$ is CPCP. Additionally, we show that any unital CPDNN map  $\Phi:\MM_2\to \MM_n$ is also CPCP. 

		\end{abstract}
		
		\begin{keyword}
			
			
		CP-preserving\sep 	CPCP\sep DNN
		\end{keyword}
		
	\end{frontmatter}
	
	
\section{Introduction}
A \emph{nonnegative} matrix is one with all its entries nonnegative, and will be denoted by $A\geq 0$. An $n\times n $ real symmetric matrix $A$ is doubly nonnegative (DNN) if it is  both nonnegative and positive semidefinite (PSD) --- that is, $\bra{x}A\ket{x}\geq 0$ for all $\ket{x}$, and will be denoted by $A \succcurlyeq 0$. Furthermore, $A$ is completely positive (CP) if it can be decomposed as $A=BB^T$, where B is a nonnegative matrix. While constructing a CP matrix is straightforward, verifying whether a given matrix is CP, is generally very difficult \cites{Dickinson.2014}[Chapter~3 of][]{Naomi}. Every completely positive matrix is DNN but  this condition is sufficient only for matrices upto size $4\times 4$. Very recently, Johnston \emph{et al.} \cite{JOHNSTON} have introduced the resource theory  of non-negativity of  quantum amplitude, where completely positive states serve as  free states. Within this framework, free operations are classified as $\mathcal{CP}$-preserving operations. They established the  necessary and sufficient conditions  for  CP-preserving  qubit channels. In this paper, we identify the  necessary and sufficient condition for $d=2,3,4$. We observe that the CP-preservinng conditions derived by Johnston \emph{et al.} can be recovered  in qubit case $(d=2)$. However  for dimensions greater than four, the provided conditions are only necessary.

To characterize the channels that remain  CP-preserving under an ancillary extension $(\idm_k\otimes \Phi)$, the authors in \ccite{JOHNSTON} have introduced completely positive completely positive (CPCP) maps. Choosing CPCP channels as the free operations satisfies the axiomatic framework of Chitambar and Gour \cite{Chitambar2019}, giving this resource theory a complete tensor product structure. Furthermore, the authors have established  equivalent forms for both  CPCP and completely positve doubly nonnegative (CPDNN) maps. Although CPDNN and CPCP maps coincide for $n=m=2$, their equivalence for $n=3, m=2$ remained an open problem. This question was recently settled in the affirmative by Cha \cite{cha2026}. In this paper, we provide an alternative proof showing that any CPDNN map \(\Phi \in \mathcal{L}(M_n, M_2)\) is also a CPCP map. In addition to this alternative proof, we demonstrate that any unital CPDNN linear map \(\Phi \in \mathcal{L}(M_2, M_n)\) is necessarily a CPCP map.

A central objective in  the framework of quantumk resource theories is identifying whether a given theory admits a maximally resourceful state --- i.e., somewhat a \emph{golden standard} state from which any other states can be prepared using only free operations. For qubit systems ($d=2$), it was established that for any qubit state $\rho$,  there exists a CP-preserving quantum channel $\Phi$ such that $\Phi(\gamma)=\rho$ where $\gamma$ is the density matrix $\gamma=(\idm \pm \sigma_y)/2$. However, the existence of such aresource state for higher dimension $(d\geq 3)$ was left open. Although we have discussed about the possibility of its non-existence in higher dimensions, we have not been able to find a proof yet.  However, we have fully answered a related question raised in \cite{JOHNSTON}, namely, the trace-distance of non-negativity fails to satisfy strongly monotonicity, by providing an explicit counterexample. 
 
 This article is organized as follows. Section~\ref{Sec: CP-Cond} characterizes CP-preserving channels for dimensions $d\leq 4$, in terms of the channel, its dual, and its Choi state, mainly leveraging the fact that in these dimensions CP and DNN are the same. This extends the characterization result for $d=2$ from \cite{JOHNSTON}.  CPCP and CPDNN channels are analyzed in Section~\ref{CPDNN}, which answers a question raised in \cite{JOHNSTON}. In Section~\ref{Sec:Sym.Reduction}, we consider evaluation of several measures of non-negativity, if the states under consideration has some symmetries. We show that in such cases, the nearest CP states can be found efficiently, in small dimensions ($d\leq 4$) even without any computation.   Section~\ref{Sec:TDNN} provides an example demonstrating that the trace distance of non-negativity lacks strong monotonicity, thereby settling another question raised in \cite{JOHNSTON}.

\section{Characterization of CP-preserving channel for $2 \leq d \leq 4$\label{Sec: CP-Cond}} 

In this Section we will fully characterize the CP-preserving channels for $2 \leq d \leq 4$. We will give several equivalent conditions and a semi-definite program (SDP). Note that this dimension restriction is natural, as for $d\geq 5$ the CP matrices are unwieldy. However, we will also present some necessary conditions for higher dimensions. We first derive the conditions in terms of the dual map of a channel.

For any linear map $\Phi:\MM_{d_1}(\CC)\to \MM_{d_2}(\CC)$, the corresponding dual map $\Phi^*:\MM_{d_2}(\CC)\to \MM_{d_1}(\CC)$ is uniquely defined via the Hilbert-Schmidt inner product identity:
\begin{equation}\label{Def:Dual.Channel}
	\braket{\Phi(X)}{Y}_{HS}=\braket{X}{\Phi^*(Y)}_{HS},\quad \forall\,X\in \MM_{d_1}(\CC),\, Y\in \MM_{d_2}(\CC).	
\end{equation}

Specifically, for a channel $\Phi=\{A_k\}$, i.e., if  the channel $\Phi$ is represented in its Kraus form as 
\[
\Phi(X) = \sum_k A_kXA_k^\dagger, \quad \text{with} \quad \sum_k A_k^\dagger A_k = \idm_{d_1},
\] 
then its dual map (not necessarily a channel as it need not be trace preserving) is $\Phi^*=\{A_k^\dagger\}$, having the Kraus form \cite{CHOI1975285,Watrous2018}
\[
\Phi^*(Y)=\sum_k A_k^\dagger Y A_k.
\]

In terms of  matrix elements, $\Phi$ and its dual  related as
\begin{equation}\label{eq-elements-relation}
	\Phi(\ketbra{i}{j})_{s,r}=\Phi^*(\ketbra{r}{s})_{j,i}
\end{equation}
We utilize this dual structure to explore channels that preserve CP matrices. By definition, a channel $\Phi$ is CP-preserving if it maps CP states exclusively to CP states. While some structure of the linear maps preserving CP matrices were known in the literature,   \cite[see e.g., pp.~514-515, ][]{Naomi}, recently Johnston and Sikora \cite{JOHNSTON} have studied them from the perspective of quantum channels. Employing results from quantum information, they have completely characterized the qubut case ($d=2$). In this Section, we present necessary and sufficient conditions for $d\leq 4$, along with a necessary condition applicable to higher dimensions.

\begin{thm}\label{thm-DNN}
	A channel  $\Phi:\MM_{d_1}(\CC)\to \MM_{d_2}(\CC)$, $2\leq d_i\leq 4$, is CP-preserving if and only if $\bra{x}\Phi^*\left(\ket{i}\bra{j}\right)\ket{x}\geq 0$, for all indices $i,j,$ and $\ket{x}\geq0$.
\end{thm}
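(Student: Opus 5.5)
The plan is to reduce CP-preservation to entrywise nonnegativity of the output, and then move that condition onto the dual map via \eqref{Def:Dual.Channel}. Recall that a CP state is a density matrix of the form $\rho=BB^T/\tr(BB^T)$ with $B\geq 0$. Writing the columns of $B$ as $\ket{x_k}\geq 0$, every CP state is a convex combination of the normalized rank-one states $\ketbra{x}{x}/\braket{x}{x}$ with $\ket{x}\geq 0$. Conversely, each such rank-one state is itself CP. So the CP states form the convex hull of these extreme points.

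\emph{Step 1 (entries of the output via the dual).} For any $X$ and indices $i,j$, I will use
\[
\bra{i}\Phi(X)\ket{j}=\braket{\ketbra{j}{i}}{\Phi(X)}_{HS}=\braket{\Phi^*(\ketbra{j}{i})}{X}_{HS}.
\]
Taking $X=\ketbra{x}{x}$ with $\ket{x}$ real gives
\[
\bra{i}\Phi(\ketbra{x}{x})\ket{j}=\overline{\bra{x}\Phi^*(\ketbra{j}{i})\ket{x}}.
\]
Since $i,j$ range over all indices, the hypothesis of the theorem is equivalent to the following statement: every entry of $\Phi(\ketbra{x}{x})$ is real and nonnegative for every $\ket{x}\geq 0$. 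Here I read ``$\geq 0$'' for a complex number as ``real and nonnegative'', and I will state this reading explicitly. This is essentially a bookkeeping step, but the index order and the conjugation need care; this is where I expect the only real subtlety.

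\emph{Step 2 (necessity).} If $\Phi$ is CP-preserving, apply it to the CP state $\ketbra{x}{x}/\braket{x}{x}$. The output is CP, hence entrywise nonnegative, so by Step~1 the stated condition holds for all $i,j$. This direction needs no dimension restriction, which gives the necessary condition in higher dimensions mentioned in the text.

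\emph{Step 3 (sufficiency).} Suppose the condition holds and let $\rho=\sum_k\lambda_k\ketbra{x_k}{x_k}$ be CP, with $\lambda_k\geq0$ and $\ket{x_k}\geq 0$. By linearity and Step~1, $\Phi(\rho)$ is entrywise real and nonnegative, so in particular it is a real symmetric matrix, because it is also Hermitian. Since $\Phi$ is a channel, $\Phi(\rho)$ is also PSD with unit trace. Hence $\Phi(\rho)$ is DNN. Finally I use $d_2\leq 4$ and the classical fact recalled in the introduction that DNN and CP coincide for matrices of size at most $4\times4$. This gives that $\Phi(\rho)$ is CP, so $\Phi$ is CP-preserving. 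Note that only $d_2\leq 4$ is actually used; the restriction on $d_1$ is harmless.
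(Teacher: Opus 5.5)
Your proposal is correct and follows essentially the paper's route: entrywise nonnegativity of $\Phi(\ketbra{x}{x})$ for $\ket{x}\geq 0$, PSD from the channel property, $\mathcal{DNN}=\mathcal{CP}$ for size at most $4$, and reading off the entries through the dual map. You also spell out two steps the paper leaves implicit: extending from the rank-one CP states to all CP states by convexity and linearity, and checking the real-symmetric requirement in DNN.

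One small slip: with $\braket{A}{B}_{HS}=\tr(A^\dagger B)$, the quantity $\braket{\ketbra{j}{i}}{\Phi(X)}_{HS}$ equals $\bra{j}\Phi(X)\ket{i}$, not $\bra{i}\Phi(X)\ket{j}$. This is harmless, because all pairs $i,j$ are quantified and $\Phi(\ketbra{x}{x})$ is Hermitian. The paper's formula for $T^x_{i,j}$ carries the same transposition.
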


\begin{proof}
	Consider a pure state $\ket{x}\bra{x}$, where $\ket{x}\geq0$. For every such $\ket{x}$, there exists a matrix $T^x$ such that
	\begin{equation}\label{Eq:CP.T}
		\Phi(\ket{x}\bra{x})=T^x.
	\end{equation}
	Now the channel condition implies $T^x$ is PSD. Since $\mathcal{CP}=\mathcal{DNN}$ up to dimension $4$, it follows that
	\[ T^x \geq 0 \iff \Phi \text{ is CP-preserving}.\]
	The claim follows by calculating the matrix element of $\Phi^*$ from Eqs.~\eqref{Eq:CP.T} and \eqref{Def:Dual.Channel}, 
	\[
	T^x_{i,j} = \bra{i}\Phi\left(\ket{x}\bra{x}\right)\ket{j} = \bra{x}\Phi^*\left(\ket{i}\bra{j}\right)\ket{x}.
	\]
\end{proof}

In order to explore this condition further, we 	use the notion of copositive matrices. 
\begin{defn}[Copositive]
	A real symmetric matrix $A$  is said to be  copositive if $\bra{x}A\ket{x}\geq 0$ for all $\ket{x}\geq 0$.
\end{defn} 
\begin{obsr}\label{Obs:Co.Phi}
	The condition $\bra{x}\Phi^*\left(\ket{i}\bra{j}\right)\ket{x}\geq 0$ for all $\ket{x}\geq 0$ in Theorem~\ref{thm-DNN} is equivalent to  the symmetric part [i.e., writing $X:=X_{\text{sym}}+X_{\text{skew-sym}}=(X+X^T)/2+(X-X^T)/2$] of $\Phi^*\left(\ket{i}\bra{j}\right)$  being copositive. Or equivalently, the symmetric part of the real part [from the Cartesian decomposition $X:=\real (X)+i\imag (X)$] of $\Phi^*\left(\ket{i}\bra{j}\right)$ being copositive and its imaginary part being skew-symmetric.
\end{obsr}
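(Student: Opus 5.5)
The statement to prove is Observation~\ref{Obs:Co.Phi}: the quadratic-form condition in Theorem~\ref{thm-DNN} is equivalent to a copositivity condition. The plan is to handle, for each fixed pair $(i,j)$, the single matrix $M:=\Phi^*(\ketbra{i}{j})\in\MM_{d_1}(\CC)$. I would treat the condition $\bra{x}M\ket{x}\geq 0$ for all $\ket{x}\geq 0$ as saying two things: the quadratic form is \emph{real} on nonnegative vectors, and it is nonnegative there. Since $\ket{x}$ is real, $\bra{x}M\ket{x}=\sum_{k,l}x_kx_lM_{kl}$, which is bilinear in the real coordinates of $\ket{x}$.

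The first step uses the decomposition $M=M_{\text{sym}}+M_{\text{skew-sym}}$. For real $\ket{x}$ we have $\bra{x}M^T\ket{x}=\bra{x}M\ket{x}$ (a scalar equals its own transpose), so $\bra{x}M_{\text{skew-sym}}\ket{x}=0$. Hence $\bra{x}M\ket{x}=\bra{x}M_{\text{sym}}\ket{x}$ for every real $\ket{x}$. The condition of Theorem~\ref{thm-DNN} is therefore exactly $\bra{x}M_{\text{sym}}\ket{x}\geq 0$ for all $\ket{x}\geq 0$. This is copositivity of $M_{\text{sym}}$, read as ``real and nonnegative'' when $M_{\text{sym}}$ has complex entries.

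The second step splits $M_{\text{sym}}=\real(M)_{\text{sym}}+i\,\imag(M)_{\text{sym}}$, which is legitimate because transposition commutes with taking real and imaginary parts. For real $\ket{x}$, the number $\bra{x}M\ket{x}$ has real part $\bra{x}\real(M)_{\text{sym}}\ket{x}$ and imaginary part $\bra{x}\imag(M)_{\text{sym}}\ket{x}$. Requiring $\bra{x}M\ket{x}\geq0$ thus means two things:
\begin{itemize}
\item[(a)] $\real(M)_{\text{sym}}$ is copositive;
\item[(b)] $\bra{x}\imag(M)_{\text{sym}}\ket{x}=0$ for all $\ket{x}\geq 0$.
\end{itemize}

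The main point, and the only nontrivial step, is that (b) forces $\imag(M)_{\text{sym}}=0$, i.e.\ $\imag(M)$ is skew-symmetric. I would prove this by polarization restricted to the nonnegative orthant. Write $S=\imag(M)_{\text{sym}}$, a real symmetric matrix. Taking $\ket{x}=\ket{k}$ gives $S_{kk}=0$. Taking $\ket{x}=\ket{k}+\ket{l}$, which is nonnegative, gives $S_{kk}+S_{ll}+2S_{kl}=0$, hence $S_{kl}=0$. So $S=0$.

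The converse follows directly from the two decompositions above: if (a) holds and $\imag(M)$ is skew-symmetric, then $\bra{x}M\ket{x}=\bra{x}\real(M)_{\text{sym}}\ket{x}\geq 0$ for all $\ket{x}\geq0$. Applying this for every pair of indices $i,j$ gives the Observation.
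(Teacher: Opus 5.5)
Your proposal is correct and follows essentially the paper's route: split $\Phi^*(\ketbra{s}{r})=A^{sr}+iB^{sr}$, observe that $\bra{x}B^{sr}\ket{x}=\tfrac12\bra{x}\bigl(B^{sr}+(B^{sr})^T\bigr)\ket{x}$ must vanish on the nonnegative orthant, and conclude that $B^{sr}$ is skew-symmetric; your test vectors $\ket{k}$ and $\ket{k}+\ket{l}$ also supply the justification for $\tfrac12\bigl(B^{sr}+(B^{sr})^T\bigr)=0$, which the paper asserts without argument. That same step shows $M_{\text{sym}}=\real(M)_{\text{sym}}$ is real, so your ``real and nonnegative'' reading of copositivity in the first step agrees with the paper's definition of copositivity for real symmetric matrices.
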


To see this explicitly, let us denote   $C^{sr} \coloneqq \Phi^*(\ket{s}\bra{r})=A^{sr}+i B^{sr}$,
where $A^{sr}$ and  $B^{sr}$ are real and imaginary part of $C^{sr}$ respectively.  
Now  \[\bra{x}\Phi^*(\ket{s}\bra{r})\ket{x}\geq0\]
\[	\implies
\bra{x}C^{sr}\ket{x}\geq 0, \quad \forall\, x\geq0\]
\[\implies  \bra{x}B^{sr}\ket{x}= \bra{x}\frac{B^{sr}+B^{{sr}^T}}{2}\ket{x}=0, \quad \forall \, x\geq 0 \]
\[\implies B^{sr}=-B^{{sr}^T}.\]
The equivalent condition for other decomposition follows from simialr arguments.

The characterization above was in terms of $\Phi^*$. We can give the equivalent characterization in terms of $\Phi$ alone.
\begin{obsr}\label{obsr-CP}
	A channel $\Phi:\MM_{d_1}(\CC)\to \MM_{d_2}(\CC)$, $2\leq d_i\leq 4$, is CP-preserving if and only if \[\frac{1}{2}\sum_{i,j=1}^{d_1}\Phi(\ket{i}\bra{j})_{r,s}\left(\ket{i}\bra{j}+\ket{j}\bra{i}\right),\] is copositive for all $r,s=1,2,\dotsc,d_2$, where $X_{r,s}$ is the $rs$-th entry of the matrix $X$: $X_{r,s}:=\bra{r}X\ket{s}$.
\end{obsr}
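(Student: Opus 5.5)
The plan is to reduce Observation~\ref{obsr-CP} directly to Theorem~\ref{thm-DNN} together with Observation~\ref{Obs:Co.Phi}, by rewriting $\Phi^*(\ket{s}\bra{r})$ entrywise in terms of $\Phi$ via \eqref{eq-elements-relation}. I would not reprove the DNN argument. I would only translate the condition $\bra{x}\Phi^*(\ket{i}\bra{j})\ket{x}\ge 0$ for all $\ket{x}\ge 0$ into a statement about a matrix built from the entries of $\Phi$.

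First, fix output indices $r,s\in\{1,\dots,d_2\}$. I would expand the test quantity from the proof of Theorem~\ref{thm-DNN} directly. For real $\ket{x}=\sum_i x_i\ket{i}\ge 0$, linearity gives
\[
\bra{r}\Phi(\ket{x}\bra{x})\ket{s}=\sum_{i,j=1}^{d_1}x_ix_j\,\Phi(\ket{i}\bra{j})_{r,s}=\bra{x}M^{rs}\ket{x},
\]
where $M^{rs}:=\sum_{i,j}\Phi(\ket{i}\bra{j})_{r,s}\ket{i}\bra{j}$. Equivalently, by \eqref{eq-elements-relation}, $M^{rs}$ is the transpose of $\Phi^*(\ket{s}\bra{r})$. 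So $\bra{x}M^{rs}\ket{x}=\bra{x}\Phi^*(\ket{s}\bra{r})^T\ket{x}=\bra{x}\Phi^*(\ket{s}\bra{r})\ket{x}$, since $x$ is real.

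Second, because $x$ is real, $\bra{x}M^{rs}\ket{x}$ depends only on the symmetric part $\tfrac12(M^{rs}+M^{rs\,T})$. This symmetric part is exactly the displayed matrix $\tfrac12\sum_{i,j}\Phi(\ket{i}\bra{j})_{r,s}(\ket{i}\bra{j}+\ket{j}\bra{i})$, after relabelling $i\leftrightarrow j$ in the second sum.

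By Theorem~\ref{thm-DNN}, $\Phi$ is CP-preserving iff $\bra{x}M^{rs}\ket{x}\ge0$ for all $r,s$ and all $\ket{x}\ge 0$. By the previous step, this holds iff the displayed symmetric matrix satisfies $\bra{x}\cdot\ket{x}\ge 0$ on the nonnegative orthant.

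The main obstacle is that copositivity was defined only for real symmetric matrices, while the entries $\Phi(\ket{i}\bra{j})_{r,s}$ may be complex. I would handle this as in the discussion after Observation~\ref{Obs:Co.Phi}. Requiring $\bra{x}\cdot\ket{x}$ to be a nonnegative real number for all $x\ge0$ forces the imaginary part of the symmetrized matrix to vanish. The reason is that a real symmetric matrix whose quadratic form vanishes on the nonnegative orthant is zero, which one checks by testing on $e_i$ and $e_i+e_j$. Hence the imaginary part of $M^{rs}$ is skew-symmetric, as in Observation~\ref{Obs:Co.Phi}. The displayed matrix is then real symmetric, and "copositive" has its literal meaning. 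This completes the equivalence.
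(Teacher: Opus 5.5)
Your proposal is correct and follows essentially the same route as the paper: you identify the displayed matrix as the symmetric part of $\Phi^*(\ket{s}\bra{r})^T$ via \eqref{eq-elements-relation}, use that a real quadratic form only sees the symmetric part (equivalently $\bra{x}X^T\ket{x}=\bra{x}X\ket{x}$ for real $\ket{x}$), and then invoke Theorem~\ref{thm-DNN}. Your explicit argument that the symmetrized imaginary part must vanish, by testing on $e_i$ and $e_i+e_j$, makes precise a point the paper only sketches in the discussion after Observation~\ref{Obs:Co.Phi}; it is what justifies calling the possibly complex displayed matrix ``copositive'' in the literal real-symmetric sense.
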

\begin{proof}
	If we assume the channel $\Phi$ to be CP-preserving, by Theorem~\ref{thm-DNN}, $\bra{x}\Phi^*(\ket{s}\bra{r})\ket{x}\geq0$. So by Observation~\ref{Obs:Co.Phi}, $\frac{C^{sr}+C^{{sr}^T}}{2}$ is copositive. Thus, 
	\[\frac{\Phi^*(\ket{s}\bra{r})+\Phi^*(\ket{s}\bra{r})}{2}^T=\frac{C^{sr}+C^{{sr}^T}}{2}=\frac{1}{2}\sum_{i,j}\Phi(\ket{i}\bra{j})_{r,s}\left(\ket{i}\bra{j}+\ket{j}\bra{i}\right)\]
	will be copositive for all $r,s$. We have used the fact that $C^{sr}_{i,j}=\Phi(\ket{j}\bra{i})_{r,s}$.
	
	To prove the  other side,  let us  assume  $\frac{1}{2}\sum_{i,j}\Phi(\ket{i}\bra{j})_{r,s}\left(\ket{i}\bra{j}+\ket{j}\bra{i}\right)$  is copositive for all $r,s$,
	and $\Phi$ is a channel. 
	Now \[\frac{1}{2}\sum_{i,j}\Phi(\ket{i}\bra{j})_{r,s}\left(\ket{i}\bra{j}+\ket{j}\bra{i}\right)=\frac{\Phi^*(\ket{s}\bra{r})+\Phi^*(\ket{s}\bra{r})}{2}^T\] 
	is copositve. Noticing that $\bra{x}X^T\ket{x}=\tr\left(X^T\ketbra{x}{x}\right)=\tr\left[\left(\ketbra{x}{x}\right)^TX\right] = \tr\left(\ketbra{x}{x}X\right)=\bra{x}X\ket{x}$ for all $\ket{x}\geq 0$, the above relation
	
	\[\implies \bra{x} \Phi^*(\ket{s}\bra{r})\ket{x}\geq 0,  \quad\forall\, \ket{x}\geqq 0, \& \, r,s.\]
	Hence, by Theorem~\ref{thm-DNN}, the channel $\Phi$ is CP-preserving.
	
	As an example, a channel  $\Phi:\MM_{3}(\CC)\to \MM_{d}(\CC)$, $2\leq d\leq 4$, is CP-preserving if and only if the following matrices
	\[Y_{r,s}=\begin{pmatrix}
		\Phi(\ket{1}\bra{1})_{r,s} & 	\Phi(\ket{1}\bra{2}+\ket{2}\bra{1})_{r,s}  &	\Phi(\ket{1}\bra{3}+\ket{3}\bra{1})_{r,s}\\
		\Phi(\ket{1}\bra{2}+\ket{2}\bra{11})_{r,s} & 	\Phi(\ket{2}\bra{2})_{r,s} &\Phi(\ket{2}\bra{3}+\ket{3}\bra{2})_{r,s} \\
		\Phi(\ket{1}\bra{3}+\ket{3}\bra{1})_{r,s}  & \Phi(\ket{2}\bra{3}+\ket{3}\bra{2})_{r,s} &\Phi(\ket{3}\bra{3})_{r,s}
	\end{pmatrix}\]
	are  copositive for all $r,s=1,2,\dotsc,d$. 
	
	We notice that while these are necessary and sufficient conditions, not all of them are independent (as a constraint). We also note that for $d\leq 4,$ easily verifiable necessary and sufficient condition are known for copositivity \cites{Diananda.1962}[Chapter~2.3 of ][]{Naomi}, thereby checking CP-preserving for $d\leq 4$ becomes an easy problem (albeit, writing them explicitly in terms of the channel parameters, like that for the $d=2$ case in \cite{Johnston2019}, is perhaps neither possible, nor desirable). 
	\begin{cor}
		Unital CP-preserving quantum channel will always be maximally incoherent operation (MIO).
	\end{cor}
	\begin{pf}If $\Phi$ is CP-presrving unital quantum channel then by Observation~\ref{Obs:Co.Phi} diagonal elements of  $\Phi^*(\ket{i}\bra{j})$ will always be non negative but trace preserving condition  of  $\Phi^*$ will imply diagonal elements 
		of $\Phi^*(\ket{i}\bra{j})$ will be zero $\forall$ $i\neq j$. Hence  $\Phi$ will map diagonal  to diagonal follows from Eq.~\eqref{eq-elements-relation}. So $\Phi$ will be MIO.
	\end{pf}
	
	\begin{cor}
		Off diagonal entry of Image of symmetric computational basis under unital CP-preserving quantum channel will always be non negative.
	\end{cor}
	
	\begin{pf}\[\Phi\left(\ket{i}\bra{j}+\ket{j}\bra{i}\right)=\Phi\left(\ket{i}\bra{i}+\ket{i}\bra{j}+\ket{j}\bra{i}+\ket{j}\bra{j}\right)-\Phi\left(\ket{i}\bra{i}+\ket{j}\bra{j}\right)\]
		Right hand side is difference of two CP matrices so MIO condition will imply  off diagonal entry  of $\Phi\left(\ket{i}\bra{j}+\ket{j}\bra{i}\right)$ will be non negative.
	\end{pf}	
	\begin{note}
		If we assume 
		\[\Phi(\ket{i}\bra{j})=R_{i,j}+i S_{i,j}\]
	\end{note}
	For $i\neq j,\,S_{i,j}$ is  symmetric and $S_{i,i}=0$.
	We can write every symmetric matrix as difference of two CP
	matrices. So if a map is CP-preserving then it will also symmetric preserving.
\end{proof}
Next we will give CP-preserving condition in term of Choi state.
\begin{obsr}{Channel $\Phi$ is CP-preserving if and only if $J_\Phi+J_\Phi^{\Gamma_2} =\sum\limits_{r,s}Y_{r,s}\otimes E_{r,s}$.} where each $Y_{r,s}$ is copositive, and $\{E_{i,j}:=\ketbra{i}{j}\}$ is the standard matrix basis.
\end{obsr}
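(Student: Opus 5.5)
We plan to reduce the statement to Observation~\ref{obsr-CP} by computing the Choi matrix explicitly and regrouping its blocks. Use the convention $J_\Phi=\sum_{i,j=1}^{d_1} E_{i,j}\otimes \Phi(E_{i,j})$. Here $\Gamma_2$ denotes the partial transpose on the second factor. Then
\[
J_\Phi^{\Gamma_2}=\sum_{i,j} E_{i,j}\otimes \Phi(E_{i,j})^T .
\]
Expand $\Phi(E_{i,j})=\sum_{r,s}\Phi(E_{i,j})_{r,s}E_{r,s}$. This gives $J_\Phi=\sum_{r,s}\bigl(\sum_{i,j}\Phi(E_{i,j})_{r,s}E_{i,j}\bigr)\otimes E_{r,s}$. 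The partial transpose swaps $E_{r,s}\leftrightarrow E_{s,r}$ in the second factor, so after relabelling,
\[
J_\Phi^{\Gamma_2}=\sum_{r,s}\Bigl(\sum_{i,j}\Phi(E_{i,j})_{s,r}E_{i,j}\Bigr)\otimes E_{r,s}.
\]
Since $\Phi$ is Hermiticity preserving, $\Phi(E_{i,j})_{s,r}=\overline{\Phi(E_{j,i})_{r,s}}$.

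The next step is to identify the coefficient $Y_{r,s}$ of $E_{r,s}$ in $J_\Phi+J_\Phi^{\Gamma_2}$ with the matrix of Observation~\ref{obsr-CP}. The natural target is the symmetrized matrix
\[
Y_{r,s}=\sum_{i,j}\Phi(E_{i,j})_{r,s}\,(E_{i,j}+E_{j,i}),
\]
or a scalar multiple of it. A direct relabelling gives instead $\sum_{i,j}\bigl(\Phi(E_{i,j})_{r,s}+\overline{\Phi(E_{j,i})_{r,s}}\bigr)E_{i,j}$. The conjugate sits where a transpose is wanted, and getting rid of it is where the bookkeeping must be done carefully.

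This is the main obstacle: fixing the exact placement of transposes and conjugates so that the sum really produces the symmetrized expression of Observation~\ref{obsr-CP}. I would first try transposing the \emph{first} tensor factor instead, or reading $\Gamma_2$ as the transpose on the input system under the ordering $\Phi(E_{i,j})\otimes E_{i,j}$. Either convention gives exactly $\sum_{r,s}\bigl(\sum_{i,j}\Phi(E_{i,j})_{r,s}(E_{i,j}+E_{j,i})\bigr)\otimes E_{r,s}$, because the partial transpose on the input index simply swaps $i\leftrightarrow j$. I would state that convention explicitly.

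Once that identity holds, the two directions follow. Since $\{E_{r,s}\}$ is a basis, the block decomposition $J_\Phi+J_\Phi^{\Gamma_2}=\sum_{r,s}Y_{r,s}\otimes E_{r,s}$ is unique. Copositivity of every $Y_{r,s}$ is, up to the factor $\tfrac12$, exactly the criterion of Observation~\ref{obsr-CP}. That criterion is equivalent to $\Phi$ being CP-preserving for $2\le d_i\le 4$, and only necessary for larger dimensions, via Theorem~\ref{thm-DNN}. Two routine remarks complete the argument. First, positive scaling preserves copositivity. Second, copositivity of a complex matrix is read, as in Observation~\ref{Obs:Co.Phi}, as $\bra{x}Y\ket{x}\ge 0$ for all $\ket{x}\ge 0$.
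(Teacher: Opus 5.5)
Your proposal is correct and follows essentially the same route as the paper: expand $J_\Phi$ in the basis $E_{i,j}\otimes E_{r,s}$, identify the coefficient block of $E_{r,s}$ in $J_\Phi+J_\Phi^{\Gamma_2}$ with (a positive multiple of) the symmetrized matrix of Observation~\ref{obsr-CP}, and invoke that observation. You are more careful than the paper, whose proof silently writes $J_\Phi^{\Gamma_2}=\frac{1}{d}\sum_{i,j}E_{i,j}\otimes\Phi(E_{j,i})$, i.e.\ uses the input-side transpose, which agrees with the output-side one only when $J_\Phi$ is real --- precisely the issue you flag; note only that under your alternative ordering $\Phi(E_{i,j})\otimes E_{i,j}$ the result comes out as $\sum_{r,s}E_{r,s}\otimes Y_{r,s}$, with the tensor factors swapped.
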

\begin{proof}
	Let us assume $\Phi$ is channel and $J_\Phi$ is its Choi matrices so 
	\[J_{\Phi}=\frac{1}{d}\sum_{i,j} \ket{i}\bra{j}\otimes \Phi(\ket{i}\bra{j})\]
	\[\frac{1}{2}\left(J_{\Phi}+J_{\Phi}^{\Gamma_2}\right)=\frac{1}{2d}\sum_{i,j} \ket{i}\bra{j}\otimes \Phi(\ket{i}\bra{j}+\ket{j}\bra{i})\]
	\[=\frac{1}{2d}\sum_{i,j,r,s} \ket{i}\bra{j}\otimes  \Phi(\ket{i}\bra{j}+\ket{j}\bra{i})_{r,s}\ket{r}\bra{s}\]
	
	\[=\frac{1}{d}\sum\limits_{i,j,r,s} \Phi\left((\ket{i}\bra{j}+\ket{j}\bra{i}\right )_{r,s} \ket{i}\bra{j} \otimes \ket{r}\bra{s}\]
	\[=\sum_{r,s} Y_{r,s}\otimes \ket{r}\bra{s}\]
	By observation \ref{obsr-CP}, $\Phi$ is CP-preserving if and only if $Y_{r,s}=\frac{1}{2}\sum\limits_{i,j}\Phi(\ket{i}\bra{j}+\ket{j}\bra{i})_{r,s}\left(\ket{i}\bra{j}\right)$ is copositive for all r,s.
	
\end{proof}

\subsection{Recover of qubit CP-preserving channel conditions}
Now we will recover the CP-preserving condition for the qubit case, reported in \ccite{JOHNSTON}.

By observation~\ref{obsr-CP},  qubit channel  $\Phi$ is CP-preserving if and only if

\[Y_{r,s}=\begin{pmatrix}
	\Phi(E_{1,1})_{r,s} & 	\Phi(E_{1,2}+E_{2,1})_{r,s} \\
	\Phi(E_{2,1}+E_{1,2})_{r,s} & 	\Phi(E_{2,2})_{r,s}  			
\end{pmatrix}\]
is copositive for all $r,s=1,2$, which are equivalent to
\[\Phi(E_{1,1})_{r,s} \geq 0, \quad\forall r,s,\]
\[\Phi(E_{2,2})_{r,s} \geq 0, \quad\forall r,s,\]
\[\Phi(E_{1,2}+E_{2,1})_{r,s}+\sqrt{\Phi(E_{1,1})_{r,s}\Phi(E_{2,2})_{r,s}} \geq 0, \quad\forall r,s.\]
If we assume matrix representation of the qubit channel  $\Phi$ in Pauli basis as
\[[\Phi]=\begin{pmatrix}
	1 & 0 & 0 & 0\\
	t_x & T_{x,x} & T_{x,y} & T_{x,z}\\
	t_y & T_{y,x} & T_{y,y} & T_{y,z}\\
	t_z & T_{z,x} & T_{z,y} & T_{z,z}\\
\end{pmatrix}\]
Then
\[\Phi(E_{1,1})=\frac{1}{2}\Phi(\id +\sigma_z)=\frac{1}{2}\begin{pmatrix}
	1+t_z+T_{z,z} & t_x +T_{x,z}- i( t_y +T_{y,z}) \\
	t_x+T_{x,z}+i (t_y+T_{y,z})& 1-t_z-T_{z,z}
\end{pmatrix}\]
\[\Phi(E_{2,2})=\frac{1}{2}\Phi(\id -\sigma_z)=\frac{1}{2}\begin{pmatrix}
	1+t_z-T_{z,z} & t_x -T_{x,z}- i( t_y -T_{y,z}) \\
	t_x-T_{x,z}+i (t_y-T_{y,z})& 1-t_z+T_{z,z}
\end{pmatrix}\]

\[\Phi(E_{1,2})=\frac{1}{2}\Phi(\sigma_x +i\sigma_y)=\frac{1}{2}\begin{pmatrix}
	T_{z,x}+i T_{z,y} &  T_{x,x}-i T_{y,x}+i T_{x,y}+T_{y,y}\\
	T_{x,x} +i T_{y,x}+i T_{x,y}-T_{y,y}& -T_{z,x}-i T_{z,y}
\end{pmatrix}\]
Hence 
\[\Phi(E_{1,2}+E_{2,1})=\begin{pmatrix}
	T_{z,x} &  T_{x,x}-i T_{y,x}\\
	T_{x,x}+i T_{y,x}& -T_{z,x}
\end{pmatrix}\]

Since 
\[\Phi(E_{1,1})_{r,s} \geq 0 \quad \text{and}\quad  \Phi(E_{2,2})_{r,s} \geq 0  \quad\forall r,s\]
So
$\Phi(E_{i,i})_{j,j}\geq 0$ for $i,j=1,2.$

\[   \implies 1\pm t_z \geq |T_{z,z}|\]
This condition is  superfluous as it is already covered in channel condition . \\
$\Phi(E_{i,i})_{1,2}\geq 0$
\[ \implies t_x\geq  |T_{x,z}|\,\text{and} \,t_y=T_{y,z}=0\]

$\Phi(E_{1,2}+E_{2,1})_{r,s}+\sqrt{\Phi(E_{1,1})_{r,s}\Phi(E_{2,2})_{r,s}} \geq 0 \quad\forall r,s$ 
\[\implies T_{y,x}=0\]
and 
\[ -\sqrt{(1+t_z)^2-T_{z,z}^2}\leq  T_{z,x}\leq \sqrt{(1-t_z)^2-T_{z,z}^2} \]

\[T_{x,x}\geq -\sqrt{t_x^2-T_{x,z}^2}\]
Finally when we consider all inequality together we will get
\[t_y=T_{y,z}=T_{y,x}=0\]
\[|T_{x,z}|\leq t_x\]
\[-\sqrt{(1+t_z)^2-T_{z,z}^2}\quad\leq T_{z,x}\leq \sqrt{(1-t_z)^2-T_{z,z}^2}\]
\[T_{x,x}\geq -\sqrt{t_x^2-T_{x,z}^2}\]

\subsection{Semidefinite Program for CP-preserving Channel Estimation ($2\leq d\leq 4$)}	
Given a set of fixed input states $\rho_i$ and corresponding output states $\sigma_i$, we address the state-transformation problem: does there exist a CP-preserving quantum channel $\Phi$ such that $\Phi(\rho_i) = \sigma_i$ for all $i$ \cite{2023semidefinite} ? 

Utilizing the Choi–Jamiołkowski isomorphism, we express the channel in terms of its Choi matrix $J$. The problem of finding a valid channel can be cast as the following Semidefinite Program (SDP) CVXProgram \cite{cvx,boyd2004convex} :

\begin{equation}
	\begin{aligned}
		\text{find} \quad & J \\
		\text{subject to} \quad & \tr_1 \left[ J \left(\rho_i^T \otimes \id_d\right) \right] = \sigma_i, \quad \forall i, \\
		& J \succcurlyeq 0, \\
		& \tr_2(J) = \id_d, \\
		& \frac{C^{sr} + {C^{sr}}^T}{2} \in \mathcal{CP}^*, \quad \forall s,r \in \{1, \dots, d_2\},
	\end{aligned}
\end{equation}
where the entries of the matrix $C^{sr} \in \mathcal{M}_{d_2}(\CC)$ are extracted from the Choi matrix $J$ according to the index relation:
\[
\left(C^{sr}\right)_{j,i} = J_{(i-1)d+r, \, (j-1)d+s}
.\]
Here, the action of the channel on an arbitrary state $X$ is explicitly evaluated from the Choi matrix via $\Phi(X) = \tr_1 \left(J \left(X^T \otimes \id_d\right) \right)$.
\begin{prop}
	There exists no unital CP-preserving channel $\Phi$ such that:
	\[
	\Phi(\ket{\psi}\bra{\psi}) = \ket{\phi}\bra{\phi},
	\]
	where the respective state vectors are given by:
	\[\ket{\psi}=\frac{1}{2}\begin{pmatrix}
		1 & i & -1 & -i
	\end{pmatrix}^T\]
	\[\ket{\phi}=\frac{1}{2}\begin{pmatrix}
		1 & i & i & -1
	\end{pmatrix}^T\]
\end{prop}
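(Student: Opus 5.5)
The plan is to reduce the existence question to a finite set of linear and conic constraints on the entries of $\Phi$, and then exhibit a contradiction in the form of a separating functional. We work in $d=4$, where Theorem~\ref{thm-DNN} and Observation~\ref{obsr-CP} give an exact characterization of CP-preserving channels: the matrices $Y_{r,s}$ built from the entries $\Phi(E_{i,j})_{r,s}$ must all be copositive.

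First we collect the structural consequences already established for \emph{unital} CP-preserving channels. By the first corollary, $\Phi$ is MIO, so $\Phi$ maps the diagonal part of any input to a diagonal matrix. Moreover the diagonal of $\Phi^*(E_{i,j})$ vanishes for $i\neq j$, so by \eqref{eq-elements-relation} $\Phi(E_{i,j})$ has zero diagonal for $i\neq j$. Since CP matrices span the real symmetric matrices, $\Phi$ maps real symmetric matrices to real symmetric matrices. In particular $\Phi(E_{i,j}+E_{j,i})$ is real with nonnegative off-diagonal entries, by the second corollary.

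Next we decompose both states. Since $|\psi_k|^2=|\phi_k|^2=1/4$, we have $\ketbra{\psi}{\psi}=\frac14\idm+\frac14 N_\psi+\frac{i}{4}K_\psi$, where $N_\psi$ is real symmetric with zero diagonal and $K_\psi$ is real antisymmetric; $\ketbra{\phi}{\phi}$ decomposes the same way. Unitality takes care of the $\frac14\idm$ term. Taking the real symmetric part of the identity $\Phi(\ketbra{\psi}{\psi})=\ketbra{\phi}{\phi}$ then gives
\[
N_\phi=\Phi(N_\psi)+\real\bigl(i\,\Phi(K_\psi)\bigr)_{\text{sym}} .
\]

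The heart of the argument is to control $\Phi(iK_\psi)$ and show that this equation is incompatible with copositivity. Concretely, $\psi$ has phases $0,\tfrac{\pi}{2},\pi,\tfrac{3\pi}{2}$, so $N_\psi=-(E_{1,3}+E_{3,1})-(E_{2,4}+E_{4,2})$. The image $\Phi(N_\psi)$ is therefore minus a sum of matrices with nonnegative off-diagonal entries, hence has nonpositive off-diagonal entries. On the other hand $N_\phi$ has positive entries, for instance $(N_\phi)_{2,3}=\cos 0=1$. The step I expect to be the main obstacle is showing that the antisymmetric term $\Phi(iK_\psi)$ cannot contribute enough real symmetric part to compensate. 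To do this I would write $\Phi(E_{i,j})=R_{i,j}+iS_{i,j}$ as in the Note. I would then use the copositivity of each $Y_{r,s}$, specifically its $2\times 2$ principal minors: $\Phi(E_{i,j}+E_{j,i})_{r,s}\ge -\sqrt{\Phi(E_{i,i})_{r,s}\Phi(E_{j,j})_{r,s}}$. Together with the MIO property, which forces $\Phi(E_{i,i})$ to be diagonal and hence makes these square roots vanish off the diagonal, this should pin the off-diagonal real parts of $\Phi(E_{i,j}+E_{j,i})$ to be $\ge 0$. The remaining task is to show that the symmetric real part produced from $iK_\psi$ vanishes, using the skew-symmetry constraints of Observation~\ref{Obs:Co.Phi}. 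Then the off-diagonal entries of $\real(\Phi(\ketbra{\psi}{\psi}))$ would all be $\le 0$, while $\real(\ketbra{\phi}{\phi})_{2,3}=\tfrac14>0$, which is the desired contradiction.

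If this hand argument does not close, the fallback is computational. I would run the SDP of the preceding subsection with the extra constraint $\Phi(\idm)=\idm$; it suffices to impose the constraint $\frac{C^{sr}+C^{sr\,T}}{2}\in\mathcal{CP}^*$, which is exactly copositivity since $\mathcal{CP}^*$ is the copositive cone (and in $d=4$ copositive $=$ PSD $+$ nonnegative). From an infeasibility certificate I would extract a dual witness $W$ with $\tr(W\ketbra{\phi}{\phi})<0$ but $\tr(\Phi(\ketbra{\psi}{\psi})W)\ge 0$ for every feasible $\Phi$. The witness would have rational entries, so the contradiction can be verified by hand.
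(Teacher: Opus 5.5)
The step you flag as the main obstacle cannot be carried out as planned, because the claim it relies on is false for general unital CP-preserving channels. Via \eqref{eq-elements-relation}, the skew-symmetry in Observation~\ref{Obs:Co.Phi} says only that $\Phi(E_{i,j}+E_{j,i})$ is real. Likewise, the copositivity of the $Y_{r,s}$ involves only $\Phi(E_{i,i})$ and $\Phi(E_{i,j}+E_{j,i})$. These constraints see $\Phi$ only on real symmetric inputs. The quantity $\real\Phi(iK)$, for $K$ real antisymmetric, is governed by the skew-symmetric imaginary parts of the $\Phi^*(E_{r,s})$, which they leave free.

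Here is a concrete counterexample. Let $\Phi_2(Y)=\frac12\tr(Y)\idm_2+\frac12\tr(\sigma_yY)\sigma_x$ (measure $\sigma_y$, prepare $\ket{\pm}$), and set $\Phi(X)=\Phi_2(X_{[12]})\oplus\Phi_2(X_{[34]})$, where $X_{[12]},X_{[34]}$ are the principal $2\times2$ blocks.
\begin{itemize}
\item $\Phi$ is a unital channel.
\item It is CP-preserving: since $\tr(\sigma_yB)=0$ for real symmetric $B$, it sends every CP matrix $A$ to the nonnegative diagonal matrix $\frac12\tr(A_{[12]})\idm_2\oplus\frac12\tr(A_{[34]})\idm_2$.
\item Both diagonal blocks of $\ketbra{\psi}{\psi}$ equal $\frac14(\idm_2+\sigma_y)$, so $\Phi(\ketbra{\psi}{\psi})=\frac12\ketbra{+}{+}\oplus\frac12\ketbra{+}{+}$. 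Its $(1,2)$ entry is $\frac14>0$, even though $\Phi(N_\psi)=0$.
\end{itemize}
Thus $\real\Phi(iK_\psi)\neq0$, and your intended conclusion that every off-diagonal entry of $\real\Phi(\ketbra{\psi}{\psi})$ is $\le0$ fails. A valid argument must use the hypothesis $\Phi(\ketbra{\psi}{\psi})=\ketbra{\phi}{\phi}$ beyond its real part.

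There is also a smaller slip. Equation \eqref{eq-elements-relation} converts the vanishing diagonal of $\Phi^*(E_{i,j})$, $i\neq j$, into $\Phi(E_{k,k})$ being diagonal. It does not give that $\Phi(E_{i,j})$ has zero diagonal, and that claim is false: the unital CP-preserving map $Y\mapsto\frac12\tr(Y)\idm_2+\frac12\tr(\sigma_xY)\sigma_z$ sends $E_{1,2}$ to $\frac12\sigma_z$. You do not use it later, so it is harmless.

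What closes the gap is positivity applied to the conjugate state, together with purity of the target.
\begin{enumerate}
\item Put $\rho=\ketbra{\psi}{\psi}$, let $\bar\rho$ be its entrywise conjugate, and set $M:=\Phi(\rho+\bar\rho)=\frac12\bigl(\idm+\Phi(N_\psi)\bigr)$. Your own argument shows that $M$ is real with $M_{2,3}\le0$.
\item Since $\braket{\bar\psi}{\psi}=\sum_k\psi_k^2=0$, the matrix $\rho+\bar\rho$ is a projector. Positivity and unitality then give $0\preceq M\preceq\idm$.
\item Also $M=\ketbra{\phi}{\phi}+\Phi(\bar\rho)\succeq\ketbra{\phi}{\phi}$. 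Hence $\bra{\phi}(\idm-M)\ket{\phi}=0$, so $M\ket{\phi}=\ket{\phi}$, and by reality $M\ket{\bar\phi}=\ket{\bar\phi}$.
\item Since $\braket{\bar\phi}{\phi}=0$ and $\tr M=2$, this forces $M=\ketbra{\phi}{\phi}+\ketbra{\bar\phi}{\bar\phi}$. Its $(2,3)$ entry is $\frac12>0$, contradicting $M_{2,3}\le0$.
\end{enumerate}
In your notation this is exactly $\real\Phi(iK_\psi)=0$. But it holds only for a channel that realizes the transformation, and it comes from positivity and purity, not from Observation~\ref{Obs:Co.Phi}.

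Your computational fallback is precisely the paper's proof, which only reports that the SDP with the extra constraint $\tr_1 J=\idm$ is infeasible in MATLAB, with no certificate. Extracting a rational dual certificate, as you propose, would already make that route rigorous. The repaired hand argument instead gives a short analytic proof. It uses only unitality, positivity, trace preservation, and nonnegativity of $\Phi(E_{k,k})$ and of $\Phi(\ketbra{v}{v})$ for $\ket{v}=\ket{1}+\ket{3}$ and $\ket{2}+\ket{4}$. In particular it does not need the copositive characterization valid for $d\le4$.
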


\begin{proof}
	This assertion is established numerically by utilizing the semidefinite programming (SDP) formulation developed in the previous section. By adding the unital constraint $\tr_1(J) = \id_{d_2}$ to the feasibility program and executing the optimization via MATLAB , the primal problem is found to be strictly infeasible. Thus, there exists no such CP-preserving unital channel.
\end{proof}

\section{CPDNN channel is equivalent to CPCP channel for $\mathcal{L}(\M_n,\M_2 )$ \label{CPDNN}}
A CPCP  map is a  linear map that preserves the CP of matrices, not only when acting  locally, but also when tensored with an  identity map of arbitray dimension. The structural characterization of CPCP linear maps in term of their Choi states and Kraus operators is provided in the theorem below. 
\begin{thm}\cite[Theorem~1]{JOHNSTON}\label{thm:CPCP_choi}
	Suppose $\Phi \in \mathcal{L}(M_n,M_m)$. The following are equivalent:
	\begin{enumerate}
		\item[(i)] $\Phi$ is CPCP  .
		
		\item[(ii)] $J_\Phi$ is CP matrix 
		
		\item[(iii)] There will exist at least one Kraus representation in which Kraus operators have non - negative entries. 
	\end{enumerate}
\end{thm}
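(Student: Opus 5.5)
The plan is to prove the cycle (i)$\Rightarrow$(ii)$\Rightarrow$(iii)$\Rightarrow$(i). Throughout, CPCP means that $\idm_k\otimes\Phi$ maps CP matrices in $\MM_k\otimes\MM_n$ to CP matrices in $\MM_k\otimes\MM_m$ for every $k$. We use the unnormalised Choi matrix $J_\Phi=\sum_{i,j}E_{i,j}\otimes\Phi(E_{i,j})$. Normalisation only rescales by a positive constant, so it does not affect membership in the CP cone.

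\textbf{(i)$\Rightarrow$(ii).} Take $k=n$ and the vector $\ket{\Omega}=\sum_i\ket{i}\otimes\ket{i}$. It is entrywise nonnegative, so $\ketbra{\Omega}{\Omega}=\sum_{i,j}E_{i,j}\otimes E_{i,j}$ is CP, being of the form $bb^T$ with $b\geq 0$. Applying $\idm_n\otimes\Phi$ gives exactly $J_\Phi$, which is therefore CP by hypothesis.

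\textbf{(ii)$\Rightarrow$(iii).} Write $J_\Phi=BB^T$ with $B\geq 0$, that is, $J_\Phi=\sum_\ell \ketbra{b_\ell}{b_\ell}$ with each column $\ket{b_\ell}\in\mathbb{R}^{nm}_{\geq 0}$. Reshape each $\ket{b_\ell}=\sum_{i}\ket{i}\otimes\ket{a_{\ell,i}}$ into the $m\times n$ matrix $A_\ell$ whose $i$-th column is $\ket{a_{\ell,i}}$, so that $\ket{b_\ell}=\sum_i\ket{i}\otimes A_\ell\ket{i}$. Then
\[
\sum_\ell\ketbra{b_\ell}{b_\ell}=\sum_{i,j}E_{i,j}\otimes\sum_\ell A_\ell E_{i,j}A_\ell^{T}.
\]
Comparing with $J_\Phi$ blockwise gives $\Phi(E_{i,j})=\sum_\ell A_\ell E_{i,j}A_\ell^T$. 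By linearity $\Phi(X)=\sum_\ell A_\ell XA_\ell^\dagger$, since each $A_\ell$ is real. Each $A_\ell$ is entrywise nonnegative because its entries are entries of $\ket{b_\ell}$.

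\textbf{(iii)$\Rightarrow$(i).} Suppose $\Phi(X)=\sum_\ell A_\ell XA_\ell^T$ with $A_\ell\geq0$. Then $(\idm_k\otimes\Phi)(Y)=\sum_\ell(\idm_k\otimes A_\ell)Y(\idm_k\otimes A_\ell)^T$. If $Y=CC^T$ with $C\geq0$, each term equals $D_\ell D_\ell^T$ with $D_\ell=(\idm_k\otimes A_\ell)C$. Here $D_\ell\geq 0$, because products and Kronecker products of nonnegative matrices are nonnegative. A sum of CP matrices is CP: concatenate the factors columnwise, $[D_1\,D_2\cdots][D_1\,D_2\cdots]^T$. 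Hence the image is CP for every $k$.

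The only step needing care is the reshaping identity in (ii)$\Rightarrow$(iii): the index conventions must match the Choi matrix ordering $E_{i,j}\otimes\Phi(E_{i,j})$ so that the blocks compare correctly. Every other step is a direct positivity-closure argument.
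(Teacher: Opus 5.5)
Your proof is correct. It is the standard argument: apply $\idm_n\otimes\Phi$ to the nonnegative vector $\sum_i\ket{i}\otimes\ket{i}$, reshape the nonnegative columns of a CP factorisation of $J_\Phi$ into nonnegative Kraus operators, and use that the CP cone is closed under $Y\mapsto\sum_\ell(\idm_k\otimes A_\ell)Y(\idm_k\otimes A_\ell)^T$. The paper gives no proof of its own and simply cites \cite[Theorem~1]{JOHNSTON}, which I believe uses this same cycle, so there is nothing further to compare.
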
 
Furthermore, the characterization of CPDNN channel is given as
\begin{thm}\cite[Theorem~4]{JOHNSTON}\label{thm:DNN_choi}
	Suppose $\Phi \in \mathcal{L}(M_n,M_m)$. The following are equivalent:
	
	\begin{enumerate}
		\item[(i)] $\Phi$ is CPDNN. 
		
		\item[(ii)] $J_{\Phi}$ is DNN matrix. 
	\end{enumerate}
\end{thm}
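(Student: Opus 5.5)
We prove the two implications separately, working with the Choi matrix $J_\Phi=\sum_{i,j}\ketbra{i}{j}\otimes\Phi(\ketbra{i}{j})\in\MM_n\otimes\MM_m$. Throughout, a CPDNN map is a completely positive map $\Phi$ such that $(\idm_k\otimes\Phi)$ sends every DNN matrix to a DNN matrix, for every ancilla dimension $k$.

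\textbf{(ii) $\Rightarrow$ (i).} Assume $J_\Phi$ is DNN. Then $J_\Phi\succcurlyeq 0$, so by Choi's theorem $\Phi$ is completely positive. Fix $k$ and let $X\in\MM_k\otimes\MM_n$ be DNN. Write $X=\sum_l \ketbra{v_l}{v_l}$ by the spectral decomposition, and reshape each $\ket{v_l}$ into a $k\times n$ matrix $V_l$. Then
\[(\idm_k\otimes\Phi)(X)=\sum_l (V_l\otimes\idm_m)\,J_\Phi\,(V_l\otimes\idm_m)^\dagger .\]
This shows $(\idm_k\otimes\Phi)(X)$ is PSD, since it is a sum of congruences of the PSD matrix $J_\Phi$. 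The $V_l$ need not be nonnegative, so entrywise nonnegativity must be checked directly.

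For that, expand in the standard basis:
\[\bigl[(\idm_k\otimes\Phi)(X)\bigr]_{(a,r),(b,s)}=\sum_{i,j}X_{(a,i),(b,j)}\,\Phi(\ketbra{i}{j})_{r,s}=\sum_{i,j}X_{(a,i),(b,j)}\,(J_\Phi)_{(i,r),(j,s)} .\]
Every term is a product of two nonnegative numbers, so the entry is nonnegative. Hermiticity and realness are inherited from $X$ and $J_\Phi$, since both are real symmetric. Hence $(\idm_k\otimes\Phi)(X)$ is DNN, and $\Phi$ is CPDNN.

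\textbf{(i) $\Rightarrow$ (ii).} Assume $\Phi$ is CPDNN and take $k=n$. Consider $X=\sum_{i,j}\ketbra{i}{j}\otimes\ketbra{i}{j}=\ketbra{\Omega}{\Omega}$ with $\ket{\Omega}=\sum_i\ket{ii}\geq 0$. This matrix is real, PSD and entrywise nonnegative, so it is DNN. Applying $\idm_n\otimes\Phi$ gives exactly $J_\Phi$, so $J_\Phi$ is DNN.

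The only delicate step is the nonnegativity in (ii) $\Rightarrow$ (i). The spectral or Kraus-type route does not suffice there, because of the non-nonnegative $V_l$. The entrywise formula above is what gives nonnegativity, and it should be presented first to avoid that trap.
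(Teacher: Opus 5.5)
The paper gives no proof of this statement. It imports it from \cite[Theorem~4]{JOHNSTON} and never states the definition of CPDNN, so there is no internal argument to compare against. Your proof is correct and self-contained.

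In (ii)$\Rightarrow$(i) you obtain the two halves of ``DNN'' separately and then intersect them. Positivity of $(\idm_k\otimes\Phi)(X)$ comes from complete positivity via Choi's theorem. Entrywise nonnegativity comes from the formula $\sum_{i,j}X_{(a,i),(b,j)}(J_\Phi)_{(i,r),(j,s)}$, which is the right way around the sign problem you point out for the reshaped $V_l$. Your congruence identity $(\idm_k\otimes\Phi)(\ketbra{v}{v})=(V\otimes\idm_m)J_\Phi(V\otimes\idm_m)^\dagger$ is correct but not needed: complete positivity already means $\idm_k\otimes\Phi$ is positive by definition. Realness and symmetry of the output follow as you say, from both factors being real symmetric together with Hermiticity.

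In (i)$\Rightarrow$(ii) the single test input $\ketbra{\Omega}{\Omega}$ suffices. Because $\ket{\Omega}\geq 0$, this test matrix is CP and not merely DNN. Your proof therefore goes through unchanged if CPDNN is read instead as ``$\idm_k\otimes\Phi$ maps CP matrices to DNN matrices for every $k$''. That robustness is worth stating explicitly, since the paper leaves the definition implicit.
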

\begin{lem}\label{-DNN}
	Corresponding to quantum operation $\Phi$, Choi matrix  $J_\Phi$ is DNN  if and only if $J_{\Phi^*}$ is DNN.
\end{lem}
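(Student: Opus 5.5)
The plan is to write both Choi matrices explicitly and show that $J_{\Phi^*}$ is obtained from $J_\Phi$ by a permutation of rows and columns, followed by an entrywise complex conjugation (equivalently a transpose, since $J_\Phi$ is Hermitian). Each of these operations preserves both entrywise nonnegativity and positive semidefiniteness. Since the statement is symmetric under $\Phi\leftrightarrow\Phi^*$ (because $(\Phi^*)^*=\Phi$), proving one direction suffices.

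First, take the unnormalised Choi matrix $J_\Phi=\sum_{i,j}\ket{i}\bra{j}\otimes\Phi(\ket{i}\bra{j})$. The normalisation $1/d$ used earlier is a positive scalar and does not affect the DNN property. Its entries are
\[
(J_\Phi)_{(i,r),(j,s)}=\Phi(\ket{i}\bra{j})_{r,s}.
\]
Likewise,
\[
(J_{\Phi^*})_{(r,i),(s,j)}=\Phi^*(\ket{r}\bra{s})_{i,j}.
\]

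Next, apply the element relation \eqref{eq-elements-relation}, which gives $\Phi^*(\ket{r}\bra{s})_{i,j}=\Phi(\ket{j}\bra{i})_{s,r}$. Hence
\[
(J_{\Phi^*})_{(r,i),(s,j)}=(J_\Phi)_{(j,s),(i,r)}.
\]
Let $W$ denote the swap operator on the two tensor factors, so that $W(\ket{i}\otimes\ket{r})=\ket{r}\otimes\ket{i}$. The identity above says exactly that $J_{\Phi^*}=W J_\Phi^T W^\dagger$.

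Since $\Phi$ is a quantum operation (CP), $J_\Phi$ is Hermitian, so $J_\Phi^T=\overline{J_\Phi}$. If $J_\Phi$ is DNN, it is real, and then $J_\Phi^T=J_\Phi$, so $J_{\Phi^*}=WJ_\Phi W^\dagger$. This is a permutation similarity. It preserves positive semidefiniteness because $W$ is unitary, and it preserves entrywise nonnegativity because it merely permutes entries. The same argument applied to $\Phi^*$ gives the converse.

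The only step needing care is the index bookkeeping in the dual relation, namely getting the order of $i,j$ and $r,s$ right when converting \eqref{eq-elements-relation} into the formula for $J_{\Phi^*}$. If the convention were off, one might get $\overline{J_\Phi}$ rather than $J_\Phi^T$. Both are harmless here: conjugation and transposition each preserve positive semidefiniteness, and for real matrices both also preserve nonnegativity.
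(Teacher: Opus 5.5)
Your proof is correct and takes essentially the same route as the paper: both rest on the entry relation \eqref{eq-elements-relation}, which shows that the entries of $J_{\Phi^*}$ are a rearrangement of those of $J_\Phi$. The only difference is that you package this as the single identity $J_{\Phi^*}=WJ_\Phi^TW^\dagger$, which gives positive semidefiniteness directly; the paper handles the PSD part separately, as a consequence of $\Phi$ and its dual both being completely positive.
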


\begin{pf}
	Let us assume $\Phi^*$ is the dual map corresponding to the quantum operation $\Phi$.
	If $\Phi$ is quantum operation then
	$J_\phi$ is positive semidefinite if and only if $J_{\phi^*}$ is positive semidefinite.
	Now by Eq.~\eqref{eq-elements-relation}
	\[\Phi(E_{i,j})_{s,r}=\Phi^*(E_{r,s})_{j,i}\]
	which implies  $J_{\Phi}$ is non negative if and only if $J_{\Phi^*} $. So $J_\Phi$ is DNN if and only if $J_{\Phi^*}$ is DNN.
\end{pf}

\begin{lem}\label{CPCP}
	Corresponding to quantum operation $\Phi$, Choi matrix  $J_\Phi$ is CP matrix  if and only if $J_{\phi^*}$ is CP.
\end{lem}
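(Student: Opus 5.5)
The plan is to show that $J_{\Phi^*}$ is obtained from $J_\Phi$ by a transformation that visibly preserves complete positivity. We will compose a permutation similarity with a transposition (up to the harmless normalization constant in the Choi matrix). Since both operations map $\mathcal{CP}$ onto itself, the equivalence follows. By symmetry, since $(\Phi^*)^*=\Phi$, it suffices to prove one direction.

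First we write both Choi matrices entrywise. From $J_\Phi=\sum_{i,j}E_{i,j}\otimes\Phi(E_{i,j})$, the entry in row $(i,s)$ and column $(j,r)$ is $\Phi(E_{i,j})_{s,r}$. Similarly, $J_{\Phi^*}=\sum_{r,s}E_{r,s}\otimes\Phi^*(E_{r,s})$ has entry $\Phi^*(E_{r,s})_{j,i}$ in row $(r,j)$ and column $(s,i)$. Eq.~\eqref{eq-elements-relation} says these two numbers coincide. Hence
\[
(J_{\Phi^*})_{(r,j),(s,i)}=(J_\Phi)_{(i,s),(j,r)}.
\]
Let $W$ be the swap permutation $\ket{a}\otimes\ket{b}\mapsto\ket{b}\otimes\ket{a}$. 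Then the right-hand side equals $(WJ_\Phi W^T)_{(s,i),(r,j)}$. This gives
\[
J_{\Phi^*}=\bigl(WJ_\Phi W^T\bigr)^T = W J_\Phi^T W^T
\]
(with the normalization factors $1/d_1$ versus $1/d_2$ absorbed as a positive scalar). I will check the index bookkeeping carefully here, since this is the only place an error could creep in.

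Next, suppose $J_\Phi$ is CP, say $J_\Phi=BB^T$ with $B\ge 0$. Being a CP matrix, $J_\Phi$ is real symmetric, so $J_\Phi^T=J_\Phi$. Then
\[
J_{\Phi^*}=W B B^T W^T=(WB)(WB)^T.
\]
Here $WB$ is nonnegative because $W$ is a permutation matrix. A positive scalar multiple of a CP matrix is still CP, so $J_{\Phi^*}$ is CP. The converse follows by applying the same argument to $\Phi^*$. Alternatively, one can invert the relation directly: $J_\Phi = W^T J_{\Phi^*}^T W$.

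The main obstacle is not the CP part but making sure the transposition does not obstruct the argument. In general $J_\Phi$ is Hermitian rather than symmetric, so $J_\Phi^T=\overline{J_\Phi}$. However, once we assume $J_\Phi$ is CP it is real, and the transpose is trivial. I would also remark, in parallel with Lemma~\ref{-DNN}, that the identity $J_{\Phi^*}=WJ_\Phi^TW^T$ independently reproves that lemma, since PSD-ness and entrywise nonnegativity are both invariant under permutation similarity and transposition.
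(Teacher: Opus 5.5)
Your proof is correct, and it takes a different route from the paper. The paper goes through Theorem~\ref{thm:CPCP_choi}. If $J_\Phi$ is CP, then $\Phi$ has a Kraus representation with entrywise nonnegative operators $A_i$. The dual $\Phi^*$ then has Kraus operators $A_i^T$, which are still nonnegative, so the same theorem gives that $J_{\Phi^*}$ is CP.

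You never invoke the Kraus characterization. Instead you derive the explicit identity $J_{\Phi^*}=WJ_\Phi^TW^T$ (up to a positive scalar) directly from Eq.~\eqref{eq-elements-relation}. You then use only that the CP cone is closed under permutation similarity and positive scaling, with transposition trivial on real symmetric matrices. Your index bookkeeping checks out.

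What each route buys:
\begin{itemize}
\item The paper's argument is a two-liner, but it leans on the external theorem of Johnston and Sikora.
\item Yours is self-contained and makes the relation between the two Choi matrices explicit. It also handles Lemma~\ref{-DNN} and this lemma in one stroke, as you note.
\end{itemize}

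At the level of factorizations the two proofs coincide. If the columns of $B$ in $J_\Phi=BB^T$ are the vectorizations of nonnegative Kraus operators $A_k$ of $\Phi$, then the columns of $WB$ are exactly the vectorizations of $A_k^T$. These are the paper's Kraus operators for $\Phi^*$, so your swap $W$ is the matrix form of the paper's step $A_k\mapsto A_k^T$.
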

\begin{pf}
	If $\Phi$ is CPCP then there will exist atlest one set of Kraus operators say $\{A_i\}$ such that each $A_i$ is non negative(entrywise) \cite[Theorem~1]{JOHNSTON} such that
	\[\Phi(X)=\sum_i A_iXA_i^T\]
	The corresponding dual map \(\Phi ^{*}\) is given by:
	\[\Phi^*(Y)=\sum_iA_i^TYA_i\]
	
	So $\{A_i^T\} $ will be the  Kraus operator for $J_{\Phi^*}$. Hence $J_{\Phi}$ is CP if and only if $J_{\Phi^*}$ is CP.
\end{pf}
\begin{thm}\cite{Berman}
	If A is symmetric and nonnegative and if its comparison
	matrix M(A) is positive semidefinite, then A is CP.
\end{thm}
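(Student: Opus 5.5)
The plan is to reduce to the classical fact that a symmetric, nonnegative, \emph{diagonally dominant} matrix is CP, via a positive diagonal scaling. Write $A=(a_{ij})$. The comparison matrix is $M(A)_{ii}=|a_{ii}|=a_{ii}$ and $M(A)_{ij}=-|a_{ij}|=-a_{ij}$ for $i\neq j$, the equalities holding because $A\geq 0$. Two elementary observations will be used.
\begin{enumerate}
\item[(a)] If $D$ is a positive diagonal matrix, then $A$ is CP if and only if $DAD$ is CP. Indeed, $DAD=(DB)(DB)^T$ with $DB\geq 0$ whenever $A=BB^T$ with $B\geq0$.
\item[(b)] $M(DAD)=D\,M(A)\,D$.
\end{enumerate}
So it suffices to find a positive diagonal $D$ such that $DAD$ is diagonally dominant.

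\textbf{Diagonally dominant case.} Suppose $a_{ii}\geq\sum_{j\neq i}a_{ij}$ for all $i$. Then
\[
A=\sum_{i<j}a_{ij}(\ket{i}+\ket{j})(\bra{i}+\bra{j})+\sum_i\Big(a_{ii}-\sum_{j\neq i}a_{ij}\Big)\ketbra{i}{i}.
\]
Every term is a nonnegative multiple of $\ket{v}\bra{v}$ with $\ket{v}\geq 0$. Hence $A$ is a sum of rank-one CP matrices, and so $A$ is CP.

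\textbf{Finding the scaling.} Since $A$ is symmetric, after a permutation (which preserves CP) it is block diagonal with irreducible blocks. A direct sum of CP matrices is CP, so I may assume $A$ is irreducible; a $1\times1$ block is trivially CP since $a_{ii}\geq0$.

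Pick $s\geq\max_i a_{ii}$ and set $N:=s\id-M(A)$. Then $N\geq 0$, $N$ is irreducible because its off-diagonal pattern is that of $A$, and $N$ is symmetric. By Perron--Frobenius, the spectral radius $\rho(N)$ is an eigenvalue of $N$ with an eigenvector $x>0$ entrywise. This gives
\[
M(A)x=(s-\rho(N))x=\lambda_{\min}(M(A))\,x\geq 0,
\]
since $M(A)\succcurlyeq 0$.

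Now let $D=\mathrm{diag}(x)$. The $i$-th row sum of $D\,M(A)\,D$ equals $x_i\,(M(A)x)_i\geq 0$. By (b), this says exactly that
\[
x_i^2a_{ii}\geq\sum_{j\neq i}x_ix_ja_{ij},
\]
that is, $DAD$ is diagonally dominant. It is also nonnegative and symmetric, so $DAD$ is CP by the diagonally dominant case, and hence $A$ is CP by (a).

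The main obstacle is obtaining a strictly positive vector $x$ with $M(A)x\geq0$. PSD alone gives no positivity, and strict positivity is needed for $D$ to be invertible. This is exactly why I reduce to irreducible blocks before applying Perron--Frobenius.
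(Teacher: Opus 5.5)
Your proof is correct and complete. The paper gives no proof of this statement: it quotes it from Berman and Shaked-Monderer and uses it only as a black box in Theorem~\ref{CPDNN,CPCP,2}. So the meaningful comparison is with the standard argument behind that citation.

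That argument has the same skeleton as yours. One scales $A$ by a positive diagonal matrix to a nonnegative diagonally dominant matrix, which is CP by the explicit rank-one decomposition you wrote. The difference is how the scaling is obtained: there it comes from M-matrix theory. A PSD Z-matrix such as $M(A)$ is an M-matrix, and when it is nonsingular, $x=M(A)^{-1}e>0$ satisfies $M(A)x>0$. A singular $M(A)$ is typically handled by applying this to $A+\epsilon\idm$, whose comparison matrix $M(A)+\epsilon\idm$ is positive definite, then letting $\epsilon\to 0$ and using that the CP cone is closed.

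Your route replaces that limiting step with two observations:
\begin{itemize}
\item a reducible symmetric matrix is completely reducible;
\item on each irreducible block, the Perron vector of $s\idm-M(A)$ is a strictly positive $x$ with $M(A)x=\lambda_{\min}(M(A))\,x\geq 0$.
\end{itemize}
Weak diagonal dominance is all the decomposition needs, so this suffices. What you gain is an exact and explicit factorization with no topological argument. Undoing the scaling writes $A$ as a sum of at most $n+\#\{i<j: a_{ij}\neq 0\}$ terms $vv^T$, each with $v\geq 0$ supported on at most two coordinates. Each such term has PSD comparison matrix, and $M$ is additive on nonnegative matrices, so this is in fact a characterization of the hypothesis $M(A)\succcurlyeq 0$.

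Two small points deserve a sentence each in a write-up. First, the comparison matrix of each irreducible block is a principal submatrix of $M(PAP^T)=PM(A)P^T$, and is therefore PSD. Second, the reverse implication in (a) follows by applying the forward one with $D^{-1}$.
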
 
\begin{thm}\label{CPDNN,CPCP,2}
	CPDNN channel is equivalent to CPCP channel for $\mathcal{L}(\M_n,\M_2 )$.
\end{thm}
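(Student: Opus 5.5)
The plan is to show that the Choi matrix $J_\Phi$ of a CPDNN channel $\Phi\in\mathcal{L}(\M_n,\M_2)$ is completely positive; by Theorem~\ref{thm:CPCP_choi} this makes $\Phi$ CPCP. The converse, CPCP $\Rightarrow$ CPDNN, is immediate because every CP matrix is DNN (Theorems~\ref{thm:CPCP_choi} and \ref{thm:DNN_choi}). So assume $J_\Phi$ is DNN, by Theorem~\ref{thm:DNN_choi}. The key point is that we must use the trace-preserving property of $\Phi$. Without it the statement would fail, since any $5\times 5$ DNN non-CP matrix can be embedded as the Choi matrix of a CP map into $\M_2$ once $n\geq 3$.

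\textbf{Step 1 (zero pattern from trace preservation).} Write
\[
J_\Phi=\frac1n\sum_{i,j}\ketbra{i}{j}\otimes\Phi(\ketbra{i}{j}),
\]
so that its $2\times 2$ blocks are $J_{ij}=\Phi(\ketbra{i}{j})/n$. Trace preservation gives $\tr\Phi(\ketbra{i}{j})=\delta_{ij}$. Hence for $i\neq j$ we get $\Phi(\ketbra{i}{j})_{1,1}+\Phi(\ketbra{i}{j})_{2,2}=0$. Since $J_\Phi$ is entrywise nonnegative, both of these entries must vanish. Every off-diagonal block $J_{ij}$ is therefore of the form $\begin{pmatrix}0&a\\ b&0\end{pmatrix}$.

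\textbf{Step 2 (bipartite graph).} Index the rows of $J_\Phi$ by pairs $(i,r)$ with $i\le n$ and $r\in\{1,2\}$. Let $U=\{(i,1)\}$ and $V=\{(i,2)\}$. By Step 1, an off-diagonal entry between two vertices of $U$ is zero, and the same holds within $V$. The diagonal block entries $(i,1),(i,2)$ connect $U$ to $V$ anyway. So the graph of $J_\Phi$ (off-diagonal nonzero pattern) is bipartite with bipartition $U\cup V$.

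\textbf{Step 3 (comparison matrix).} Let $D$ be the diagonal signature matrix with $+1$ on $U$ and $-1$ on $V$. Because every nonzero off-diagonal entry joins $U$ to $V$, conjugation by $D$ flips the sign of exactly the off-diagonal entries and leaves the diagonal fixed. Since $J_\Phi\geq 0$, this gives $M(J_\Phi)=DJ_\Phi D$. This matrix is congruent to $J_\Phi$ and hence PSD. By the cited theorem of Berman, a symmetric nonnegative matrix whose comparison matrix is PSD is CP, so $J_\Phi$ is CP.

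The only delicate point is Step 1. One must check the Choi-matrix ordering convention, so that the vanishing diagonal entries of the blocks really isolate $U$ from $U$ and $V$ from $V$. One must also check that the trace-preserving hypothesis is indeed part of the statement (``channel''). Beyond that, the argument is just the classical fact that bipartite graphs are completely positive graphs, realized here through the comparison-matrix theorem already quoted.
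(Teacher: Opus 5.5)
Your proof is correct and follows essentially the paper's argument. Trace preservation plus entrywise nonnegativity kills the diagonal entries of the off-diagonal blocks, the resulting zero pattern is bipartite, so the comparison matrix is PSD and Berman's theorem gives complete positivity. The only difference is that the paper runs this on $J_{\Phi^*}$ (whose two $n\times n$ diagonal blocks are diagonal matrices) and transfers back to $J_\Phi$ via Lemma~\ref{CPCP}, whereas you work directly on $J_\Phi$. That detour is unnecessary, since $J_{\Phi^*}$ is permutation-similar to $J_\Phi$. You also make explicit the signature congruence $M(J_\Phi)=DJ_\Phi D$, which the paper only asserts.
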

\begin{pf}Let $\Phi$ be a CPDNN channel then its Choi matrix $J_\Phi$ will be DNN. Trace preserving condition implies \[\Phi(E_{i,j})_{m,m}=0\,\text{for}\,i\neq j\]
	Now from Eq.~\ref{eq-elements-relation} \[\Phi(E_{i,j})_{m,m}=\Phi^*(E_{m,m})_{j,i}=0\,\forall i\neq j\] 
	Hence \[J_{\Phi^*}=\begin{pmatrix}  
		D&F\\
		F^T&I-D
	\end{pmatrix}\]If $J_\Phi$ is DNN so $J_{\Phi^*}$ is DNN. The comparison matrix of $J_{\Phi^*}$ is positive semidefinite, so $J_{\Phi^*}$ is CP then from lemma(\ref{CPCP}) $J_\Phi$ will also CP.\\
\end{pf}
\begin{rmk*}
	\[X=	
	\begin{pmatrix}
		A_{11}& A_{12} & \hdots &  A_{1n}\\
		A_{12}^\dagger & A_{22} & \hdots & A_{2n}\\	
		\vdots & \vdots & \ddots & \vdots \\
		A_{1n}^\dagger&   A_{2n}^\dagger &\hdots & A_{nn}
	\end{pmatrix}
	\]
\end{rmk*}

Given that $X$ is a positive semidefinite matrix with entrywise nonnegative elements, where each block $A_{ij}$ is of size $2 \times 2$ satisfying $\text{Tr}(A_{ij})=0$ for all $i < j$ and $\text{Tr}(A_{ii})=1$ for all $i$, it follows from Theorem~\ref{CPDNN,CPCP,2} that $X$ is CP.
\begin{thm}\label{CPDNN,CPCP,Unital}
	Every CPDNN unital linear map $\mathcal{L}(\M_2,\M_n )$ are CPCP.
\end{thm}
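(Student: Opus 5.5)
We will reduce Theorem~\ref{CPDNN,CPCP,Unital} to Theorem~\ref{CPDNN,CPCP,2} by passing to the dual map. Let $\Phi\in\mathcal{L}(\M_2,\M_n)$ be CPDNN and unital. By Theorem~\ref{thm:DNN_choi}, $J_\Phi$ is DNN. Lemma~\ref{-DNN} then gives that $J_{\Phi^*}$ is DNN, so $\Phi^*\in\mathcal{L}(\M_n,\M_2)$ is CPDNN, again by Theorem~\ref{thm:DNN_choi}. Since $\Phi$ is unital, $\Phi(\idm_2)=\idm_n$. By the defining identity \eqref{Def:Dual.Channel}, for every $X\in\M_n$ we have
\[
\tr\Phi^*(X)=\braket{\idm_2}{\Phi^*(X)}_{HS}=\braket{\Phi(\idm_2)}{X}_{HS}=\tr X,
\]
so $\Phi^*$ is trace preserving. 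Hence $\Phi^*$ is a CPDNN \emph{channel} from $\M_n$ to $\M_2$. Theorem~\ref{CPDNN,CPCP,2} then says $\Phi^*$ is CPCP, and Theorem~\ref{thm:CPCP_choi} gives that $J_{\Phi^*}$ is CP. Finally, Lemma~\ref{CPCP}, applied to $\Phi^*$ (whose dual is $\Phi$), gives that $J_\Phi$ is CP. By Theorem~\ref{thm:CPCP_choi}, $\Phi$ is therefore CPCP.

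The main step to check is that Lemmas~\ref{-DNN} and \ref{CPCP} apply when the roles of $\Phi$ and $\Phi^*$ are swapped. Both lemmas rest only on the entrywise relation \eqref{eq-elements-relation}: $J_{\Phi^*}$ is a permutation-similarity (a reindexing and transpose) of $J_\Phi$. Both PSD-ness and nonnegativity are preserved under this operation. CP-ness is preserved too, either because $B\mapsto PB$ keeps $B$ nonnegative, or via the Kraus argument $\{A_i\}\mapsto\{A_i^T\}$. Neither lemma actually needs trace preservation, so they apply to the unital map $\Phi$ as well as to its dual channel. I will state this explicitly, normalizing $J$ without the factor $1/d$ if necessary, since scaling does not affect the DNN or CP properties.

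The remaining potential obstacle is whether Theorem~\ref{CPDNN,CPCP,2} is valid for every $n$. Its proof uses the block form
\[
J_{\Phi^*}\ \text{(in the $2\times2$-output arrangement)}=\begin{pmatrix}D&F\\F^T&I-D\end{pmatrix},
\]
with $D$ diagonal, together with the comparison-matrix criterion of \cite{Berman}. If that step needs reinforcement, I would argue it directly. Here $D$ and $I-D$ are nonnegative diagonal matrices, so the comparison matrix $M(J)$ differs from $J$ only by flipping the sign of $F$. This is the congruence by $\mathrm{diag}(I,-I)$, so $M(J)$ is PSD exactly when $J$ is. Hence $J_{\Phi^*}$ is CP by \cite{Berman}, and the chain above closes.
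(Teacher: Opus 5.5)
Your proposal is correct and follows the paper's own argument. The paper also passes to the dual $\Phi^*:\M_n\to\M_2$, notes that it is a CPDNN channel, and invokes Theorem~\ref{CPDNN,CPCP,2}, phrased as a contradiction rather than directly. Your version simply spells out what the paper leaves implicit: unitality of $\Phi$ gives trace preservation of $\Phi^*$, and Lemmas~\ref{-DNN} and \ref{CPCP} need no trace preservation, so they transfer DNN- and CP-ness of the Choi matrix in both directions.
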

\begin{pf} On the contrary if there exist CPDNN unital linear map $\Phi$ which is not CPCP then corresponding dual map  $\Phi^*:\mathcal{M}_n\to\mathcal{M}_2$ will be CPDNN quantum chaannel but not CPCP but by above theorem it is  not possible. Hence we will get contradiction. So every CPDNN unital linear map from $\mathcal{M}_2\to\mathcal{M}_n$ will be CPCP.
	
\end{pf}

\begin{rmk*}
	\[Y=	
	\begin{pmatrix}
		B_{11} & B_{12}\\
		B_{12}^\dagger  & \idm_n - B_{11}
	\end{pmatrix}
	\]\end{rmk*}
Given that $Y$ is a PSD matrix with entrywise nonnegative elements where each block $B_{ij}$ is of size $n \times n$, it follows from Theorem~\ref{CPDNN,CPCP,Unital} that $Y$ is CP.

\section{Evaluation of measures of non-negativity: symmetry reductions \label{Sec:Sym.Reduction}}

Following the general framework of a QRT, many quantifiers (or measures) of non-negativity were introduced in \ccite{JOHNSTON}. The two main quantifiers studied there (the $1$-norm of negativity measure was defined only for pure states, presumably, extended to the mixed states via usual convex roof construction), namely, the robustness ($N^R$) and the trace distance ($N^T$) can be efficiency computed using the SDP for $d\leq 4$, and for $d\geq 5$ can be approximated via the inequalities
\begin{equation}\label{Eq:Higherarchy.N}
N^X_{\mathcal{DNN}}(\rho)\leq N^X_{\mathcal{CP}}(\rho)\leq N^X_{\mathcal{DDN}}(\rho), \quad X=R, T.
\end{equation}

In this Section, we show that if $\rho$ has some symmetries, then the nearest CP state can be taken as the one having some special form, thereby reducing the computation. We will focus on the computation for the uniform state, considered in detail as example~1 in \ccite{JOHNSTON}, defined by 
\begin{equation}\label{Def:Uniform.d}
\ket{\Y_d}:=\sum\limits_{i=1}^d \omega^{i-1}\ket{i},\quad \omega:=e^{2\pi i/d},
\end{equation} 
and for the measure $N^T$. The result for $N^T(\ket{\Y})$ will also be used in the following Sections. 

Now, noticing that the only free (unitary) symmetries that keep a state invariant in \mbox{non-negativity} are the permutations, we have the following result.
 \begin{prop}\label{Prop:Sym.Reductions}
 	Let a state $\rho$ be invariant under a set of $n$ permutations $\{P_i\}$. If $\tau$ is a closest CP state to $\rho$ for evaluating $N^T(\rho)$, then $\sigma:=(\sum P_i\tau P_i^T)/n$ is also a closest CP state.
 \end{prop}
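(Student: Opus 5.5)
The argument rests on two facts: the CP set of states is convex and invariant under permutation conjugation, and the trace norm is unitarily invariant and convex. We take $N^T(\rho)=\min_{\tau\in\mathcal{CP}}\norm{\rho-\tau}_1$ (up to the normalisation used in \ccite{JOHNSTON}, which plays no role). Let $\tau$ attain the minimum; such a minimiser exists because the set of CP density matrices is compact. We need to show that $\sigma:=\frac1n\sum_i P_i\tau P_i^T$ is again a CP state and that $\norm{\rho-\sigma}_1\le\norm{\rho-\tau}_1$. Equality then follows from the minimality of $\tau$.

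First we check that $\sigma$ is feasible. Each $P_i$ is a real permutation matrix, so $P_i^T=P_i^{-1}=P_i^\dagger$. If $\tau=BB^T$ with $B\ge 0$, then $P_i\tau P_i^T=(P_iB)(P_iB)^T$, and $P_iB$ is still entrywise nonnegative. Hence each $P_i\tau P_i^T$ is CP, and it has unit trace. The CP matrices form a convex cone, since $BB^T+CC^T=[B\;C][B\;C]^T$ and $[B\;C]\ge0$. Therefore the average $\sigma$ is CP with $\tr\sigma=1$, i.e., a CP state.

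Next we bound the distance. The invariance hypothesis $P_i\rho P_i^T=\rho$ lets us write
\[
\rho-\sigma=\frac1n\sum_i P_i(\rho-\tau)P_i^T .
\]
By the triangle inequality and unitary invariance of the trace norm,
\[
\norm{\rho-\sigma}_1\le\frac1n\sum_i\norm{P_i(\rho-\tau)P_i^T}_1=\norm{\rho-\tau}_1 .
\]
Combined with minimality this gives equality, so $\sigma$ is also a closest CP state.

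No step here is a real obstacle. The only point needing care is to read ``invariant'' as $P_i\rho P_i^T=\rho$ for every $i$ in the given set; we do not need the $P_i$ to form a group. If one additionally wants $\sigma$ itself to be $P_i$-invariant, one should average over the group generated by the $P_i$; the same argument goes through with $n$ replaced by the group order. The argument works verbatim for the DNN and DDN relaxations in \eqref{Eq:Higherarchy.N}, since those sets are also convex and permutation invariant.
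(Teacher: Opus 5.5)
Your proof is correct and takes the same approach as the paper: minimality of $\tau$ gives $\norm{\rho-\tau}_1\le\norm{\rho-\sigma}_1$, and invariance of $\rho$ together with the triangle inequality and the unitary invariance of the trace norm gives the reverse inequality. You also spell out two points the paper leaves implicit: why each $P_i\tau P_i^T$ is again CP (via $(P_iB)(P_iB)^T$), and that $\sigma$ is itself invariant only if one averages over the group generated by the $P_i$.
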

  \begin{proof}
 	Since $\tau$ is the closest CP state and $\sigma$ is another CP state (being a convex mixture of CP states), we must have \begin{equation}\label{Eq:NT.Leq.Rho-Sigma}
 		N^T(\rho):=\norm{\rho-\tau}_1\leq \norm{\rho-\sigma}_1.
 	\end{equation}
 	However, using the invariance of $\rho$ (and trace norm) under permutations and triangle inequality, we have 
 	\begin{equation}\label{Eq:NT.Geq.Rho-Sigma}
 		\norm{\rho-\sigma}_1= \,\norm{\rho-\frac{1}{n}\sum\limits_{i=1}^nP_i\tau P_i^T} =\, \norm{\frac{1}{n}\sum\limits_{i=1}^nP_i\left(\rho-\tau\right)P_i^T}_1\leq \frac{1}{n}\sum\limits_{i=1}^n\norm{P_i\left(\rho-\tau\right)P_i^T}_1 = \norm{\rho-\tau}_1=: N^T(\rho).
 	\end{equation}
 	The claim follows from \eeqref{Eq:NT.Leq.Rho-Sigma} and \eeqref{Eq:NT.Geq.Rho-Sigma}.
 \end{proof}

With only a slight modifications in the arguments, it readily follows that the Proposition~\ref{Prop:Sym.Reductions} is extendable to other domains, as well as other measures. Thus we have the following results.

\begin{obsr}\label{Obs:Sym.Reductions.R.T}
		Let a state $\rho$ be invariant under a set of $n$ permutations $\{P_i\}$. If $\tau$ is a closest CP state to $\rho$ for evaluating $N^X_Y(\rho)$, then $\sigma:=(\sum P_i\tau P_i^T)/n$ is also a closest CP state, for all $X=R,T$ and $Y=\mathcal{DDN},\mathcal{CP},\mathcal{DNN}$.
\end{obsr}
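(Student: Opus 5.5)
The plan is to adapt the proof of Proposition~\ref{Prop:Sym.Reductions} separately for the two measures, $X=T$ and $X=R$. The one structural fact needed for every free set $Y\in\{\mathcal{DDN},\mathcal{CP},\mathcal{DNN}\}$ is that the set of $Y$-states is convex and closed under permutation conjugation $\tau\mapsto P\tau P^T$. For each $Y$ I will check this directly.
\begin{itemize}
\item Permutation conjugation only reorders rows and columns. It therefore preserves nonnegativity of entries, positive semidefiniteness, unit trace, and diagonal dominance, since diagonal dominance is a row-wise condition. It also preserves the form $BB^T$ with $B\geq 0$, because $PBB^TP^T=(PB)(PB)^T$ and $PB\geq0$.
\item Each of these cones is convex. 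Intersecting with unit trace keeps the state sets convex.
\end{itemize}
Consequently $\sigma=\frac1n\sum_i P_i\tau P_i^T$ lies in the $Y$-set whenever $\tau$ does. (Strictly speaking, for $Y=\mathcal{DDN},\mathcal{DNN}$ the word ``CP state'' in the statement should be read as ``$Y$-state''.)

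For $X=T$, the argument of Proposition~\ref{Prop:Sym.Reductions} carries over verbatim with $\mathcal{CP}$ replaced by $Y$. Since $\sigma$ is a $Y$-state and $\tau$ is optimal, minimality gives $\norm{\rho-\tau}_1\le\norm{\rho-\sigma}_1$. For the reverse inequality I write $\rho-\sigma=\frac1n\sum_i P_i(\rho-\tau)P_i^T$ using $P_i\rho P_i^T=\rho$. The triangle inequality and unitary invariance of the trace norm then give $\norm{\rho-\sigma}_1\le\norm{\rho-\tau}_1$, so $\sigma$ is also optimal.

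For $X=R$, I will use the robustness definition of \ccite{JOHNSTON}. As I recall it, $N^R_Y(\rho)$ is the minimal $s\ge0$ such that $(\rho+s\omega)/(1+s)$ is a $Y$-state for some state $\omega$ (or a $Y$-state $\omega$, in the standard variant). Here the ``closest state'' is the optimal $\tau=(\rho+s^*\omega)/(1+s^*)$. Averaging under the permutations gives
\[
\sigma=\frac1n\sum_i P_i\tau P_i^T=\frac{\rho+s^*\bar\omega}{1+s^*},\qquad \bar\omega:=\frac1n\sum_i P_i\omega P_i^T,
\]
where again $P_i\rho P_i^T=\rho$ is used. The state $\bar\omega$ is a valid state, and it is a $Y$-state if $\omega$ is, by the closure property above. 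Also $\sigma$ is a $Y$-state by that same property. Hence $\sigma$ is feasible with the same value $s^*$, and since $s^*$ is already minimal, $\sigma$ is optimal.

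The main obstacle is bookkeeping rather than mathematics. The robustness variant must be fixed precisely as in \ccite{JOHNSTON}, and the meaning of ``closest state'' for $N^R$ must be identified correctly. Once both are pinned down, every step reduces to convexity, permutation invariance of the free sets, and invariance of $\rho$.
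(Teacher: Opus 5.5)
Your proposal is correct and takes the paper's route. The paper only asserts that the argument of Proposition~\ref{Prop:Sym.Reductions} extends ``with slight modifications'' to the other free sets and to $N^R$, and you supply those modifications: permutation-closure and convexity of each free set, the unchanged trace-norm argument for $N^T$, and permutation-averaging of the optimal robustness decomposition for $N^R$.
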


We now concentrate on evaluating $N^T(\ket{\Y_d})$. Noticing that the state $\rho=\ketbra{\Y_d}{\Y_d}$ remains invariant under $n$ number of permutations (whose form are also obvious), where \[n=\begin{cases}
	\frac{d-1}{2}, & \text{for odd } $d$,\\
	\frac{d}{2}, & \text{for even } $d$,
	\end{cases}\]
the closest state $\sigma$ for evaluating $N_Y^T(\ket{\Y_d})$, $Y=\mathcal{DDN},\mathcal{CP},\mathcal{DNN}$ can be taken as 
\begin{equation}\label{Eq:Closest.sigma.N.T}
\sigma=\begin{tikzpicture}[baseline=(m.center)]
	\matrix (m) [
	matrix of math nodes,
	left delimiter={[},
	right delimiter={]},
	nodes in empty cells,
	row sep=0.8em,
	column sep=1.2em
	] {
		1 & a_1 & a_2 & \dots & a_n & a_{n-1} & \dots & a_1 \\
		&     &     &       &     &         &       & \vdots \\
		&     &     &       &     &         &       & a_{n-1} \\
		&     &     &       &     &         &       & a_n \\
		&     &     &       &     &         &       & \vdots \\
		&     &     &       &     &         &       & a_2 \\
		&     &     &       &     &         &       & a_1 \\
		&     &     &       &     &         &       & 1 \\
	};
	
	\draw[decorate, decoration={brace, amplitude=5pt, raise=2pt}] 
	(m-1-2.north west) -- (m-1-5.north east);
	
	\draw[decorate, decoration={brace, amplitude=5pt, raise=2pt}] 
	(m-1-6.north west) -- (m-1-8.north east);
	
	\draw[<->, >=stealth] (m-1-1.south) -- (m-8-8.west);
	\draw[<->, >=stealth] (m-1-2.south) -- (m-7-8.west);
	\draw[<->, >=stealth] (m-1-3.south) -- (m-6-8.west);
	\draw[<->, >=stealth] (m-1-5.south) -- (m-4-8.west);
	\draw[<->, >=stealth] (m-1-6.south) -- (m-3-8.west);
	
	\node[anchor=center] at (m-6-2) {\textbf{Conjugate}};
\end{tikzpicture}.
\end{equation}

With the obvious restrictions coming from $\sigma\in Y$, this facilitates computations a lot. For example, it immediately follows that the trace-distance non-negativity of $\ket{\Y_d}$ equals to the maximum value of trace-distance measure of coherence \cite{Rana2016} for any $d$-dimensional states, for all $d\leq 4$:
\begin{equation}\label{Eq:NT.CT}
N^T_Y(\ket{\Y_d})=2\left(1-\frac{1}{d}\right)=C^T(\ket{\psi_d}),\quad d\leq 4,\quad \ket{\psi_d}:=\frac{1}{\sqrt{d}}\sum\limits_{i=1}^d \ket{i}, 
\end{equation}
and $Y=\mathcal{DDN},\mathcal{CP},\mathcal{DNN}$.
 We summarize these results in the following Table.
 
 \begin{table}[h]
 	\centering
 	\renewcommand{\arraystretch}{1.3}
 	\setlength{\tabcolsep}{12pt}
 	\begin{tabular}{c c c }
 		\hline
 		\textbf{$d$} & \textbf{$N^T_Y(\ket{\Y_d})$} & Does $N^T_Y$ differs with $Y$?\\ 
 		\hline
 		
 		$2\leq d\leq 4$ & $2\left(1-\frac{1}{d}\right)=C^T(\ket{\psi_d})$  & No \\
 		
 		$5$ & $N^T_{\text{DDN}}(\ket{\Y_d})=N^T_{\text{CP}}(\ket{\Y_d})=\frac{1}{10} \left(17-\sqrt{5}\right)$&  Yes,  $N^T_{\text{DNN}}(\ket{\Y_d})=1+\frac{1}{\sqrt{5}}$ \\ 
 		 	
 		\hline
 	\end{tabular}
 	\caption{Computation of $N^T_Y(\ket{\Y_d})$ for different dimensions $d$, and sets $Y$.}
 	\label{Tab:Y.NT}
 \end{table}

\section{Trace distance of non-negativity is not a strong monotone \label{Sec:TDNN}}
 The trace-distance of non-negativity, similar to the trace-distance of coherence \cite{Rana2016}, constitutes a valid resource measure i.e. it  vanishes on all free states and satifies both the monotonicity and convexity  conditions \cite{JOHNSTON}. However, this raises the question whether this measure satisfies the strongly monotonicity (does not increase, on average under free operation) i.e.,
\[\sum_ip_iN^T(K_i\rho K_i^\dagger/ p_i)\leq N^T(\rho)\]
where $\{K_i\}'s$ are the Kraus operators of CP-preserving quantum channel and $p_i:=\tr\left(K_i\rho K_i^\dagger\right)$. This was one of the very precise questions raised in \ccite{JOHNSTON}. We settle this question in negative by presenting an explicit counterexample.  Consider the following state 
 \[\rho= p\rho_1\oplus (1-p)\rho_2,\]
 where 
\[\rho_1=\ketbra{\Y_2}{\Y_2}=\frac{1}{2}\begin{pmatrix*}[r]
	1 &-1\\
	-1 & 1
\end{pmatrix*};\quad  \rho_2=\ketbra{\Y_3}{\Y_3}=
\frac{1}{3}\begin{pmatrix}
	1 & \omega & \omega^2\\
	\omega^2 & 1  & \omega\\
	\omega & \omega^2 &1
\end{pmatrix}.\]

Since $\idm$ commutes with every matrix, the eigenvalues of the Hermitian, trace-zero matrix $(\rho-\idm/5)$ can be calculated easily, and one verifies that 
\[\left\|\rho-\frac{\idm_5}{5} \right\|_1=\frac{1}{5} \left(| 1-5 p| +| 4-5 p| +3\right).\]
Hence, 
\[N^T(\rho)\leq \left\|\rho-\frac{\idm_5}{5} \right\|_1=\frac{1}{5} \left(|1-5 p| +| 4-5 p| +3\right). \]

Now  consider the following channel 
\[\Phi(X)=\sum_{i=1}^2K_iXK_i^\dagger,\]
where 
\[K_1=\idm_2\oplus 0 ;\quad K_2=0\oplus \idm_3.\]
Evidently, $\Phi$ is CPCP, hence CP-preserving. 
Now, \[\sum_{i=1}^2\tr\left(K_i\rho K_i^\dagger\right)N^T\left(\frac{K_i\rho K_i^\dagger}{\tr\left(K_i\rho K_i^\dagger\right)}\right)=p N^T(\rho_1)+(1-p)N^T(\rho_2)=p+(1-p)\frac{4}{3}=\frac{4-p}{3}.\]

\[\text{One verifies that }\, \frac{4-p}{3}>  \frac{1}{5}\left(| 1-5 p| +| 4-5 p| +3\right) \,\text{when} \,\frac{4}{25} < p<\frac{2}{5}.\]
 Hence
 \[\sum_{i=1}^2\tr\left(K_i\rho K_i^\dagger\right)N^T\left(\frac{K_i\rho K_i^\dagger}{\tr\left(K_i\rho K_i^\dagger\right)}\right)> N^T(\rho),\quad \text{for}\quad\frac{4}{25} < p<\frac{2}{5}.\]
 Therefore, the trace distance of non-negativity is not a strong monotone under  CPCP (and hence also for CP-preserving) free operations. While this example can trivially be extended to any higher dimension, similar examples can also be constructed for $\rho=p\ketbra{\Y_m}{\Y_m}+(1-p)\ketbra{\Y_n}{\Y_n}$, $m,n\geq 3$.
  
\printbibliography

@article{JOHNSTON,
	title = {Completely positive completely positive maps (and a resource theory for non-negativity of quantum amplitudes)},
	journal = {Linear Algebra and its Applications},
	volume = {653},
	pages = {395-429},
	year = {2022},
	issn = {0024-3795},
	doi = {10.1016/j.laa.2022.08.016},
	url = {https://www.sciencedirect.com/science/article/pii/S0024379522002932},
	author = {Nathaniel Johnston and Jamie Sikora},
}

@book{Naomi,
    author = {Shaked-Monderer, Naomi and Berman, Abraham},
	title = {Copositive and Completely Positive Matrices},
	publisher = {\href{https://doi.org/10.1142/11386 }{WORLD SCIENTIFIC}},
	year = {2021},
     address = {},
	edition   = {},
	doi = {10.1142/11386},
	eprint = {https://www.worldscientific.com/doi/pdf/10.1142/11386}
}

@book{Berman,
		author = {Berman, Abraham and Shaked-Monderer, Naomi},
		title ={Completely Positive Matrices},
		publisher = {\href{https://doi.org/10.1142/5273} {WORLD SCIENTIFIC}},
		year = {2003},
		doi = {10.1142/5273},
		address = {},
		edition   = {},
		URL = {https://doi.org/10.1142/5273},
		eprint = {https://www.worldscientific.com/doi/pdf/10.1142/5273}
}

@article{Dickinson.2014,
  title={On the computational complexity of membership problems for the completely positive cone and its dual},
  author={Dickinson, Peter JC and Gijben, Luuk},
  journal={Computational Optimization and Applications},
  volume={57},
  number={2},
  pages={403--415},
  year={2014},
  publisher={Springer},
  doi={10.1007/s10589-013-9594-z}
}

@article{Chitambar2019,
	title = {Quantum resource theories},
	author = {Chitambar, Eric and Gour, Gilad},
	journal = {Rev. Mod. Phys.},
	volume = {91},
	issue = {2},
	pages = {025001},
	numpages = {48},
	year = {2019},
	month = {Apr},
	publisher = {American Physical Society},
	doi = {10.1103/RevModPhys.91.025001},
	url = {https://link.aps.org/doi/10.1103/RevModPhys.91.025001}
}

@article{cha2026,
	title={{CPDNN} quantum channels with qubit output are {CPCP}},
	author={Cha, Hyunho},
	journal={arXiv:2603.16962},
	year={2026},
url={https://doi.org/10.48550/arXiv.2603.16962}
}

@article{Diananda.1962,
  author    = {Diananda, P. H.},
  title     = {On non-negative forms in real variables some or all of which are non-negative},
  journal   = {Mathematical Proceedings of the Cambridge Philosophical Society},
  volume    = {58},
  number    = {1},
  pages     = {17--25},
  year      = {1962},
  publisher = {Cambridge University Press},
  doi       = {10.1017/S0305004100036185}
}

@article{Johnston2019,
	title={Pairwise Completely Positive Matrices and Conjugate Local Diagonal Unitary Invariant Quantum States},
	volume={35},
	ISSN={1081-3810},
	url={http://dx.doi.org/10.13001/1081-3810.3842},
	DOI={10.13001/1081-3810.3842},
	journal={The Electronic Journal of Linear Algebra},
	publisher={University of Wyoming Libraries},
	author={Johnston, Nathaniel and MacLean, Olivia},
	year={2019},
	month=Feb, pages={156–180}
	 }

@book{2023semidefinite,
	author = {Skrzypczyk, Paul and Cavalcanti, Daniel},
	title = {Semidefinite Programming in Quantum Information Science},
	publisher ={\href{https://doi.org/10.1088/978-0-7503-3343-6} {IOP Publishing}},
	year = {2023},
	series = {2053-2563},
	isbn = {978-0-7503-3343-6},
	url = {https://doi.org/10.1088/978-0-7503-3343-6},
	doi = {10.1088/978-0-7503-3343-6}
	}

@misc{cvx,
		author       = {Michael Grant and Stephen Boyd},
		title        = {{CVX}: Matlab Software for Disciplined Convex Programming, version 2.1},
		howpublished = {\url{https://cvxr.com/cvx}},
		month        = mar,
		year         = 2014
	}

@article{Rana2016,
		title = {Trace-distance measure of coherence},
		author = {Rana, Swapan and Parashar, Preeti and Lewenstein, Maciej},
		journal = {Phys. Rev. A},
		volume = {93},
		issue = {1},
		pages = {012110},
		numpages = {7},
		year = {2016},
		month = {Jan},
		publisher = {American Physical Society},
		doi = {10.1103/PhysRevA.93.012110},
		url = {https://link.aps.org/doi/10.1103/PhysRevA.93.012110}
	}

@book{boyd2004convex,
		title={Convex optimization},
		author={Boyd, Stephen and Vandenberghe, Lieven},
		year={2004},
		publisher={\href{https://doi.org/10.1017/CBO9780511804441}{Cambridge university press}}
	}

@article{CHOI1975285,
		title = {Completely positive linear maps on complex matrices},
		journal = {Linear Algebra and its Applications},
		volume = {10},
		number = {3},
		pages = {285-290},
		year = {1975},
		issn = {0024-3795},
		doi = {https://doi.org/10.1016/0024-3795(75)90075-0},
		url = {https://www.sciencedirect.com/science/article/pii/0024379575900750},
		author = {Man-Duen Choi}
	}

@book{Watrous2018, 
		place={Cambridge},
		 title={The Theory of Quantum Information},
		  publisher={\href{https://doi.org/10.1017/9781316848142}{Cambridge University Press}}, 
		  author={Watrous, John}, year={2018}
		  }

\end{document}